\def\IsDraft{} 
\DeclareFontShape{T1}{lmr}{bx}{sc} { <-> ssub * cmr/bx/sc }{}
\newcommand{\remove}[1]{}
\newcommand{\TLLNCS}[2]{\ifdefined\IsLLNCS#1\else #2 \fi}
\newcommand{\authnote}[2]{[{\color{red}\textbf{#1:}}~{\color{blue} #2}]}
\newcommand{\authnote}[2]{}
\newcommand{\Tableofcontents}{
\thispagestyle{empty}
\pagenumbering{gobble}
\clearpage
{\small{
\tableofcontents
}}
\thispagestyle{empty}
\clearpage
\pagenumbering{arabic}
}
\newcommand{\ie}  {i.e.,\xspace}
\newcommand{\eg}  {e.g.,\xspace}
\newcommand{\sset}[1]{\{#1\}}
\newcommand{\ssize}[1]{|#1|}
\renewcommand{\Pr}{{\mathrm {Pr}}}
\newcommand{\pr}[1]{\Pr\left[#1\right]}
\newcommand{\zo}{\{0,1\}}
\newcommand{\is}{{i^\ast}}
\newcommand{\Is}{{I^\ast}}
\newcommand{\Js}{{J^\ast}}
\newcommand{\poly}{\mathsf{poly}}
\newcommand{\polylog}{\mathsf{polylog}}
\newcommand{\assign}{\ensuremath{\mathrel{\vcenter{\baselineskip0.5ex \lineskiplimit0pt \hbox{\scriptsize.}\hbox{\scriptsize.}}}=}}
\renewcommand{\cref}{\Cref} 
\newaliascnt{claiml}{theorem}
\newtheorem{claiml}[claiml]{Claim}
\renewenvironment{claim}{\begin{claiml}}{\end{claiml}}
\newtheorem{theorem}{Theorem}[section]
\newaliascnt{lemma}{theorem}
\newtheorem{lemma}[lemma]{Lemma}
\newaliascnt{claim}{theorem}
\newtheorem{claim}[claim]{Claim}
\newaliascnt{corollary}{theorem}
\newaliascnt{proposition}{theorem}
\newtheorem{proposition}[proposition]{Proposition}
\newaliascnt{conjecture}{theorem}
\newaliascnt{definition}{theorem}
\newtheorem{definition}[definition]{Definition}
\newaliascnt{remark}{theorem}
\newaliascnt{example}{theorem}
\newaliascnt{construction}{theorem}
\crefname{lemma}{Lemma}{Lemmas}
\crefname{figure}{Figure}{Figures}
\crefname{claim}{Claim}{Claims}
\crefname{corollary}{Corollary}{Corollaries}
\crefname{proposition}{Proposition}{Propositions}
\crefname{conjecture}{Conjecture}{Conjectures}
\crefname{definition}{Definition}{Definitions}
\crefname{remark}{Remark}{Remarks}
\crefname{exmaple}{Example}{Examples}
\crefname{equation}{Equation}{Equations}
\newaliascnt{proto}{theorem}
\crefname{proto}{protocol}{protocols}
\newaliascnt{algo}{theorem}
\crefname{algo}{algorithm}{algorithms}
\newaliascnt{expr}{theorem}
\crefname{experiment}{experiment}{experiments}
\newcommand \mycaption {\small }     
\newcommand \mylabel {}
    \newenvironment{nfbox}[3]{
    \renewcommand \mycaption {#1}
    \renewcommand \mylabel {#2}
    \begin{center}
    \begin{small}
    \begin{tabular}{|ll|}
    \hline
    \hspace{.3ex}
    \begin{minipage}{.97\linewidth}
         \vspace{0.5ex}
         #3
         }
         {
         \vspace{-2ex}
         \captionof{figure}{\mycaption}
         \label{\mylabel}
     \end{minipage}
     &\hspace{.3ex} \\
     \hline
     \end{tabular}
     \end{small}
     \end{center}
    }
\newcommand{\Adv}{{\ensuremath{\mathsf{Adv}}}\xspace}
\newcommand{\Party}{{\mathsf{P}}\xspace}
\newcommand{\secParam}{{\ensuremath{\kappa}}\xspace}
\newcommand{\negl}{\mathsf{negl}}
\newcommand{\Ps}{{\Party_\is}\xspace}
\newcommand{\IS}{{I^\ast}\xspace}
\newcommand{\JS}{{J^\ast}\xspace}
\newcommand{\sender}{{\Party_s}\xspace}
\newcommand{\calS}{{\ensuremath{\mathcal{S}}}\xspace}
\newcommand{\calD}{{\ensuremath{\mathcal{D}}}\xspace}
\newcommand{\calA}{{\ensuremath{\mathcal{A}}}\xspace}
\newcommand{\calB}{{\ensuremath{\mathcal{B}}}\xspace}
\newcommand{\SMbox}[1]{\mbox{\scriptsize {\sc #1}}}
\newcommand{\VIEW}{\SMbox{VIEW}}
\newcommand{\viewmain}{\VIEW^\textsf{main}}
\newcommand{\outmain}{Y^\mathsf{main}}
\newcommand{\viewcrashA}{\VIEW^\mathsf{crash\mhyphen A}}
\newcommand{\outcrashA}{Y^\mathsf{crash\mhyphen A}}
\newcommand{\viewcrashB}{\VIEW^\mathsf{crash\mhyphen B}}
\newcommand{\outcrashB}{Y^\mathsf{crash\mhyphen B}}
\newcommand{\viewcrashS}{\VIEW^{\mathsf{crash\mhyphen S}_\beta}}
\newcommand{\outcrashS}{Y^{\mathsf{crash\mhyphen S}_\beta}}
\mathchardef\mhyphen="2D
\newcommand{\iith}[1] {$#1$\textsuperscript{th}\xspace}
\newcommand{\ith}           {\iith{i}}
\newlength{\protowidth}
\newcommand{\pprotocol}[5]{
{\begin{figure}[#4]
\begin{center}
\setlength{\protowidth}{0.97\textwidth}
\addtolength{\protowidth}{-2\intextsep}

\fbox{
        \small
        \hbox{\quad
        \begin{minipage}{\protowidth}
    \begin{center}
    {\bf #1}
    \end{center}
        #5
        \end{minipage}
        \quad}
        }
        \caption{\label{#3} #2}
\end{center}
\vspace{-4ex}
\end{figure}
} }
\newcommand{\KeyGen}{\mathsf{KeyGen}}
\newcommand{\Sign}{\mathsf{Sign}}
\newcommand{\Ver}{\mathsf{Ver}}
\newcommand{\sk}{\mathtt{sk}}
\newcommand{\pk}{\mathtt{pk}}
\newcommand{\Fmine}{\mathcal{F}_{\mathsf{mine}}}
\newcommand{\mine}{\mathtt{mine}}
\newcommand{\verify}{\mathtt{verify}}
\newcommand{\ext}{\mathsf{Ext}}
\newcommand{\fbc}{\mathsf{Flood}\text{-}\mathsf{BC}_{\Fmine}^{n,\epsilon,\secParam}}
\newcommand{\flood}{\mathsf{Flood}}
\newcommand{\Flood}{\flood}
\newcommand{\floodrounds}{7\ln\big(\frac{n}{2\cdot (\ln(n)+\kappa)}\big)+2}
\newcommand{\exec}{\mathsf{Exec}}
\newcommand{\MC}{\mathsf{MC}}
\newcommand{\locality}{k}
\newcommand{\inputCoins}{\textsc{SetupAndCoins}}
\newcommand{\mainAttack}{\textsc{AttackMain}}
\newcommand{\secondAttack}{\textsc{AttackCrashA}}
\newcommand{\thirdAttack}{\textsc{AttackCrashB}}
\newcommand{\attackDisconnect}{\mathcal{E}^\mathsf{main}_\mathsf{disconnect}}
\newcommand{\crashADisconnect}{\mathcal{E}^\mathsf{crash \mhyphen A}_\mathsf{disconnect}}
\newcommand{\crashBDisconnect}{\mathcal{E}^\mathsf{crash \mhyphen B}_\mathsf{disconnect}}
\newcommand{\crashBStoP}{\mathcal{E}^\mathsf{crash \mhyphen B}_\mathsf{S\to P}}
\newcommand{\mainStoP}{\mathcal{E}^\mathsf{main}_\mathsf{S\to P}}
\newcommand{\crashAPtoS}{\mathcal{E}^\mathsf{crash \mhyphen A}_\mathsf{P\to S}}
\newcommand{\mainPtoS}{\mathcal{E}^\mathsf{main}_\mathsf{P\to S}}
\newcommand{\attackLowLocal}{\mathcal{E}^\mathsf{main}_\mathsf{low\mhyphen locality}}
\newcommand{\attackLowLocalCrash}{\mathcal{E}^{\mathsf{crashS}_\beta}_\mathsf{low\mhyphen locality}}
\newcommand{\AttackBeta}{\textsc{AttackCrashS}_\beta}
\newcommand{\Srv}{S}
\newcommand{\inputCoinsAdaptive}{\textsc{SetupAndCoins}}
\title{Communication Lower Bounds\\ for Cryptographic Broadcast Protocols\thanks{A preliminary version of this work appeared in \emph{DISC 2023}.}}
\author{Erica Blum\thanks{University of Maryland. E-mail: \texttt{erblum@umd.edu}.}
\and Elette Boyle\thanks{Reichman University and NTT Research. E-mail: \texttt{eboyle@alum.mit.edu}.}
\and Ran Cohen\thanks{Reichman University. E-mail: \texttt{cohenran@runi.ac.il}.}
\and Chen-Da Liu-Zhang\thanks{HSLU and Web3 Foundation. E-mail: \texttt{chen-da.liuzhang@hslu.ch}.}
}
\begin{document}
\sloppy

\maketitle
\thispagestyle{empty}


\begin{abstract}
Broadcast protocols enable a set of $n$ parties to agree on the input of a designated sender, even in the face of malicious parties who collude to attack the protocol. In the honest-majority setting, a fruitful line of work harnessed randomization and cryptography to achieve low-communication broadcast protocols with sub-quadratic total communication and with ``balanced'' sub-linear communication cost per party.

However, comparatively little is known in the \emph{dishonest-majority} setting. Here, the most communication-efficient constructions are based on the protocol of Dolev and Strong (SICOMP '83), and sub-quadratic broadcast has not been achieved even using randomization and cryptography. On the other hand, the only nontrivial $\omega(n)$ communication lower bounds are restricted to \emph{deterministic} protocols, or against \emph{strong adaptive} adversaries that can perform ``after the fact'' removal of messages.

\medskip
We provide new communication lower bounds in this space, which hold against arbitrary cryptography and setup assumptions, as well as a simple protocol showing near tightness of our first bound.

\begin{itemize}
\item {\bf Static adversary.}
We demonstrate a tradeoff between \emph{resiliency} and \emph{communication} for randomized protocols secure against $n-o(n)$ \emph{static} corruptions. For example, $\Omega(n\cdot \polylog(n))$ messages are needed when the number of honest parties is $n/\polylog(n)$; $\Omega(n\sqrt{n})$ messages are needed for $O(\sqrt{n})$ honest parties; and $\Omega(n^2)$ messages are needed for $O(1)$ honest parties.

Complementarily, we demonstrate broadcast with $O(n\cdot\polylog(n))$ total communication and balanced $\polylog(n)$ per-party cost, facing any constant fraction of static corruptions.

\item {\bf Weakly adaptive adversary.}
Our second bound considers $n/2 + k$ corruptions and a \emph{weakly adaptive} adversary that cannot remove messages ``after the fact.'' We show that any broadcast protocol within this setting can be attacked to force
an arbitrary party to send messages to $k$ other parties.

Our bound implies limitations on the feasibility of \emph{balanced} low-communication protocols: For example, ruling out broadcast facing $51\%$ corruptions, in which all non-sender parties have sublinear communication locality.
\end{itemize}
\end{abstract}

\Tableofcontents


\section{Introduction}

In a \emph{broadcast} protocol (a.k.a.\ Byzantine generals~\cite{PSL80,LSP82}) a designated party (the sender) distributes its input message in a way that all honest parties agree on a common output, equal to the sender's message if the sender is honest. Broadcast is amongst the most widely studied problems in the context of distributed computing, and forms a fundamental building block in virtually any distributed system requiring reliability in the presence of faults. The focus of this work is on synchronous protocols that proceed in a round-by-round manner.

Understanding the required communication complexity of broadcast is the subject of a rich line of research. Most centrally, this is measured as the total number of bits communicated within the protocol, as a function of the number of parties $n$ and corrupted parties $t$.
Other metrics have also been studied, such as \emph{message complexity} (number of actual messages sent), \emph{communication locality} (defined as the \emph{maximal degree} of a party in the induced communication graph of the protocol execution \cite{BGT13}), and \emph{per-party} communication requirements (measuring how communication is split across parties).

The classical lower bound of Dolev and Reischuk \cite{DR85} showed that $\Omega(n+t^2)$ messages are necessary for \emph{deterministic} protocols (a cubic message complexity is also sufficient facing any linear number of corruptions \cite{DR85,DS83,MR21}\footnote{We note that \cite{DR85,DS83,MR21} rely on cryptography, so they are not deterministic per se; however, these protocols make a black-box use of the cryptographic primitives and are deterministic otherwise.}).
This result came as part of several seminal impossibility results for deterministic protocols presented in the '80s, concerning feasibility \cite{FLP83}, resiliency \cite{LSP82,FLM86}, round complexity \cite{FL82,DS83}, and connectivity \cite{Dolev82,FLM86}. Those lower bounds came hand in hand with feasibility results initiated by Ben-Or \cite{BenOr83} and Rabin \cite{Rabin83}, as well as Dolev and Strong \cite{DS83}, showing that randomization and cryptography are invaluable tools in achieving strong properties within broadcast protocols.

As opposed to bounds on feasibility, resiliency, and round complexity, the impossibility of \cite{DR85} held for over 20 years, in both the honest- and dishonest-majority settings. Recently, there has been progress in the honest-majority setting, with several works demonstrating how randomization and cryptography can be used to bypass the classical communication complexity bound and achieve sub-quadratic communication: with information-theoretic security \cite{KSSV06,KS09,KS11,BGH13} or with computational security under cryptographic assumptions \cite{CM19,ACDNPRS19,CKS20,BKLL20,BCG21}; some of these protocols even achieve poly-logarithmic locality and ``balanced'' sub-linear communication cost per party. While the security of some of these constructions holds against a \emph{static} adversary that specifies corruptions before the protocol's execution begins, some of these protocols are even secure against a \emph{weakly adaptive} adversary; that is, an adversary that cannot retract messages sent by a party upon corrupting that party. Abraham et al.\ \cite{ACDNPRS19} showed that this relaxation is inherent for sub-quadratic broadcast, even for randomized protocols, by demonstrating an $\Omega(t^2)$ communication lower bound in the presence of a \emph{strongly rushing} adversary; that is, an adversary that can ``drop'' messages by corrupting the sender after the message is sent but before it is delivered --- this ability is known as \emph{after-the-fact removal.}

Focusing on the \emph{dishonest-majority} setting, however, comparatively little is known about communication complexity.
Here, the most communication-efficient broadcast constructions are based on the protocol of Dolev and Strong~\cite{DS83}, and broadcast with $o(nt)$ messages has not been achieved even using randomization and cryptography.
The state-of-the-art protocols, for a constant fraction $t=\Theta(n)$ of corruptions, are due to Chan et al.\ \cite{CPS20} in the weakly adaptive setting under a trusted setup assumption, and to Tsimos et al.\ \cite{TLP22} in the static setting under a weaker setup assumption; however, both works require $\Omega(nt)$ communication, namely $\tilde{O}(n^2)$.\footnote{As standard in relevant literature, in this work $\tilde O$ notation hides polynomial factors in $\log(n)$ as well as in the cryptographic security parameter $\secParam$.}
On the other hand, the only nontrivial $\omega(n)$ communication lower bounds are those discussed above, restricted to deterministic protocols, or against strong adaptive adversaries.

\subsection{Our Contributions}

In this work, we explore the achievable communication complexity of broadcast in the dishonest-majority setting.
We provide new communication lower bounds in this space, which hold against arbitrary cryptographic and setup assumptions, as well as a simple protocol showing near tightness of our first bound.
Our results consider a \emph{synchronous} communication model: lower bounds in this model immediately translate into lower bounds in the \emph{asynchronous} and \emph{partially synchronous} models, whereas protocols in the latter models can only tolerate $t<n/3$ corruptions \cite{DLS88} implying that synchrony is inherently needed for our protocol construction.

\paragraph{Static adversary.}
We begin with the setting of static corruptions.
We demonstrate a simple modification to the protocol of Chan et al.\ \cite{CPS20}, incorporating techniques from Tsimos et al.\ \cite{TLP22}, which obtains a new protocol with essentially optimal $\tilde O(n)$ communication.
The resulting protocol relies on the same assumptions as \cite{CPS20}: namely, cryptographic verifiable random functions (VRFs)\footnote{A verifiable random function \cite{MRV99} is a pseudorandom function that provides a non-interactively verifiable proof for the correctness of its output.} and a trusted public-key infrastructure (PKI) setup, where the keys for each party are honestly generated and distributed. Further, the protocol is resilient against any constant fraction of static corruptions as in \cite{TLP22}, and achieves balanced $\tilde{O}(1)$ cost per party.

\def\ThmUBStatic
{
Let $0<\epsilon<1$ be any constant.
Assuming a trusted-PKI for VRFs and signatures, it is possible to compute broadcast with $\tilde O(n)$ total communication ($\tilde{O}(1)$ per party) facing a static adversary corrupting $(1-\epsilon)\cdot n$ parties.
}
\begin{proposition}
[sub-quadratic broadcast facing a constant fraction of static corruptions]\label{thm:ub:static}
\ThmUBStatic
\end{proposition}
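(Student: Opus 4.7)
The plan is to adapt the sub-quadratic broadcast protocol of Chan et al.~\cite{CPS20}, which achieves $\tilde O(n)$ communication in the honest-majority regime by running a Dolev--Strong-style backbone in which each round is executed only by a small VRF-elected ``awake'' sub-committee, and port it to the dishonest-majority regime by importing the signature-chain technique of Tsimos et al.~\cite{TLP22}. The key conceptual change is that, whereas CPS20 requires each per-round sub-committee to have honest majority, for our dishonest-majority Dolev--Strong emulation we only need every sub-committee to contain \emph{at least one} honest party. Setting the per-round VRF threshold so that the expected committee size is $s=\Theta(\log n / \epsilon)$, a Chernoff bound combined with a union bound over the $t+1=\Theta(n)$ rounds ensures that every committee $C_r$ contains an honest member except with probability $n\cdot(1-\epsilon)^s = \negl(n)$.

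The protocol itself runs $t+1$ rounds of Dolev--Strong signature-chain growth, using a TLP22-style flood primitive for dissemination. In round $r$, an awake party $P\in C_r$ that sees for the first time a valid chain of length $r-1$ on some message $m$ appends its own signature together with its VRF proof of committee membership, and re-injects the extended chain into the flood, which forwards it to $O(\log n)$ uniformly random peers (and thereby, with high probability, reaches an honest member of $C_{r+1}$). Because $C_r$ contains an honest party, the Dolev--Strong invariant is preserved: any valid chain on $m$ visible to one honest party in round $r$ becomes visible to all honest parties by round $t+1$, so correctness follows from the classical Dolev--Strong inductive argument with ``one honest party per committee'' replacing the standard global ``one honest party'' condition. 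At the end of round $t+1$ each party applies the standard Dolev--Strong output rule.

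Communication follows from the sparseness of activity: by Chernoff, each party is awake in $O(\log n)$ rounds with high probability, and in each awake round sends $O(\log n)$ flood messages, yielding $\tilde O(1)$ per party and, summing over parties, $\tilde O(n)$ total. The main technical obstacle is showing that a chain held by an honest awake member of $C_r$ reaches an honest awake member of $C_{r+1}$ within the prescribed per-party flood budget: this is exactly where the TLP22 random-gossip analysis has to be imported, with the flood fanout and committee threshold constants tuned in concert. A secondary subtlety is to instruct honest parties to relay at most one chain per distinct message value seen, preventing an adversarial sender from inflating the message complexity via exponentially many forking signature chains. Once these ingredients are combined, validity, agreement, and the communication bounds all drop out of routine bookkeeping over the CPS20 committee framework and the TLP22 flood analysis.
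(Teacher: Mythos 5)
There is a genuine gap, and it lies exactly where your construction departs from the paper's. You run $t+1=\Theta(n)$ Dolev--Strong rounds with a fresh committee $C_r$ per round, so an accepted chain at the end carries up to $\Theta(n)$ signatures, i.e.\ $\Theta(n\kappa)$ bits. Then even a single dissemination of one such chain to all parties costs $\Omega(n^2\kappa)$ bits, so the claimed $\tilde O(n)$ total (and $\tilde O(1)$ per party) cannot hold; your accounting only counts the \emph{number} of flood messages sent by awake parties and ignores both the message length and the relaying work of non-awake parties. The latter cannot be ignored: a committee member sending to $O(\log n)$ uniformly random peers hits an honest member of $C_{r+1}$ only with probability about $\log^2 n/n$, so you need multi-hop relaying by everybody (an expander/gossip layer), and then every party's relay cost in every round must be charged. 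There is also a correctness issue with per-round committees: the last-round Dolev--Strong argument needs an accepted full-length chain to contain an honest signature, but ``one signature from each $C_r$'' does not guarantee this (a corrupt party can sit in many committees); restoring the classical guarantee forces $t+1$ \emph{distinct} signers, which is precisely what makes the chains, and hence the communication, super-linear.

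The paper avoids all of this by keeping the CPS20 structure intact: a \emph{single} committee of size $R=O(\kappa/\epsilon)$ is elected once (via the VRF-based oracle), the protocol runs only $R+1$ stages, and a stage-$r$ message needs the sender's signature plus $r-1$ signatures from \emph{distinct committee members}. Since $R$ upper-bounds the committee size, a chain reaching length $R+1$ must include the honest committee member, which yields the Dolev--Strong contradiction without ever letting chains grow with $n$; message size is $O(\kappa^2)$, independent of $n$. The only change relative to CPS20 is replacing each all-to-all step by a flood over a random graph of per-party degree $O(\log n+\kappa)$ (the TLP22-style ingredient), in which \emph{every} party relays at most once per flood instance, and the number of flood instances is bounded by $O(\kappa/\epsilon)$ stages times a constant number of bits, giving $\tilde O(1)$ per party and $\tilde O(n)$ overall. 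To repair your proof you would have to abandon the $t+1$-round, per-round-committee design and adopt this ``short chain from one small committee'' structure, at which point you are reproducing the paper's argument.
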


\medskip

Perhaps more interestingly, in the regime of $n-o(n)$ static corruptions, we demonstrate a feasibility tradeoff between \emph{resiliency} and \emph{communication} that nearly tightly complements the above upper bound. We show that resilience in the face of only $\epsilon(n)\cdot n$ honest parties, for $\epsilon(n) \in o(1)$, demands message complexity scaling as $\Omega(n/\epsilon(n))$.
Note that a lower bound on message complexity is stronger than for communication complexity, directly implying the latter.
Our lower bound holds for randomized protocols, given any cryptographic assumption and any setup information that is generated by an external trusted party and given to the parties before the beginning of the protocol, including the assumptions of the above upper bound.

\def\ThmLBStatic
{
Let $\epsilon(n)\in o(1)$.
If there exists a broadcast protocol that is secure against $(1-\epsilon(n))\cdot n$ static corruptions, then the message complexity of the protocol is $\Omega(n\cdot \frac1{\epsilon(n)})$.
}
\begin{theorem}[communication lower bound for static corruptions]\label{thm:lb:static}
\ThmLBStatic
\end{theorem}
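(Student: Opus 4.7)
The plan is to extend the classical Dolev--Reischuk message-complexity argument to the randomized, dishonest-majority setting via a split-brain reduction. Let $h = \epsilon(n)\cdot n$, $t = n - h = (1-\epsilon(n))\cdot n$, and denote the sender by $s$. Suppose toward contradiction that $\pi$ is a secure broadcast protocol whose expected message complexity satisfies $M = o(n/\epsilon(n))$. Fix two reference executions $E_0$ and $E_1$ corresponding to the all-honest execution of $\pi$ with sender input $0$ and $1$, respectively; by validity, in $E_b$ every honest party outputs $b$. Let $X_{i,j}$ denote the indicator that party $j$ sends at least one message to party $i$ during the execution; then $\mathbb{E}[\sum_{i,j} X_{i,j}] \le M$ in both $E_0$ and $E_1$.

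I next construct a static attacker $\calA$ as follows. Sample a uniform victim $v$ distinct from $s$, and a uniform honest coalition $H$ of size $h-1$ disjoint from $\{s,v\}$; corrupt every remaining party, including $s$. During the execution, $\calA$ internally maintains two virtual copies of the protocol: a $0$-world in which the parties in $H$ are real and all other parties are honestly simulated with sender bit $0$, and a $1$-world in which $v$ is real and all other parties are honestly simulated with sender bit $1$. Messages sent by corrupt parties to any member of $H$ are drawn from the $0$-world; messages sent to $v$ are drawn from the $1$-world. The strategy is realizable because $s$ is corrupt (and can therefore emit signed messages for both bits), and $\calA$ holds the full secret setup material of every other corrupt party.

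The crux is to show that, conditioned on the event $\mathcal{E}$ that no messages are exchanged between $v$ and $H$ in either direction during the actual execution, the joint view of $H$ is distributed identically to its view in $E_0$, and the view of $v$ is distributed identically to its view in $E_1$. Under $\mathcal{E}$, the victim $v$ therefore outputs $1$ while every party in $H$ outputs $0$, violating consistency. To lower-bound $\Pr[\mathcal{E}]$, I apply linearity of expectation over the random choice of $(v,H)$ and the coins of $\pi$:
\[
\mathbb{E}\bigl[\text{cross-messages between } v \text{ and } H\bigr] \;\le\; 2 \cdot \frac{h-1}{n-2} \cdot \frac{1}{n} \cdot \mathbb{E}\Bigl[\sum_{i,j} X_{i,j}\Bigr] \;=\; O\Bigl(\frac{\epsilon(n)\cdot M}{n}\Bigr) \;=\; o(1),
\]
so by Markov's inequality $\Pr[\mathcal{E}] = 1-o(1)$, and the attack succeeds with overwhelming probability, contradicting security.

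The main obstacle I anticipate is formalizing the view-indistinguishability claim in the presence of arbitrary cryptographic setup, since the honest parties in $H$ and $v$ hold secret randomness and setup material that the adversary cannot fully simulate. The saving grace is that $\calA$ never needs to impersonate any honest party toward another honest party---every message that would require such impersonation is precisely a cross-message between $v$ and $H$, which is absent under $\mathcal{E}$. A careful hybrid, proceeding from the split-brain execution to $E_0$ (erasing $v$'s presence) and to $E_1$ (erasing $H$'s presence), then yields perfect view-equivalence under $\mathcal{E}$ regardless of the underlying setup and cryptographic primitives, so the lower bound holds uniformly across setup assumptions.
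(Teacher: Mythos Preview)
Your split-brain attack is the right overall shape and matches the paper's approach at a high level, but there is a genuine gap in each of the two central steps.

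\textbf{View equivalence.} You assert that, under $\mathcal{E}$, the joint view of $H$ is identical to its view in the all-honest execution $E_0$. This is not true: in $E_0$, party $v$ participates honestly with sender bit $0$ and may well send messages to members of $H$; under $\mathcal{E}$ in the actual attack, $H$ receives nothing from $v$. Even if the adversary internally simulates a virtual $v$ in the $0$-world, it cannot deliver virtual-$v$'s messages to $H$ without impersonating an honest party through the PKI. So $H$'s view is that of $E_0$ \emph{with $v$'s messages deleted}, which is a different execution and one for which you have not invoked validity. The paper sidesteps this by comparing not to an all-honest run but to a \emph{crash} run: it fixes in advance a balanced partition $\calA,\calB$ of the non-sender parties, takes the honest parties to be a random $\calS\subseteq\calA$ together with a single random $\Ps\in\calB$, and has the adversary play toward $\calS$ an execution in which \emph{all of $\calB$ is crashed}. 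Under the disconnect event, $\calS$'s view then exactly matches that crash execution (validity applies there because only $|\calB|\le t$ parties are corrupt and the sender is honest), and symmetrically for $\Ps$.

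\textbf{Bounding $\Pr[\mathcal{E}]$.} Your Markov step is circular. You bound the expected number of $v$--$H$ cross-messages by averaging the indicators $X_{i,j}$ from $E_0$ and $E_1$ over the random choice of $(v,H)$. But the quantity you must control lives in the \emph{actual} adversarial execution, whose communication pattern depends on the adversary, which in turn depends on $(v,H)$. You cannot freeze the execution and then average over $(v,H)$ when every choice of $(v,H)$ induces a different execution. The paper flags exactly this as ``the challenge'' and resolves it by (i) splitting the disconnect event into ``$\calS$ speaks to $\Ps$ first'' versus ``$\Ps$ speaks to $\calS$ first,'' and (ii) noting that \emph{until} the other side speaks, each side's view coincides with the corresponding crash execution---and the crash executions are entirely independent of $(\calS,\Ps)$, so one may legitimately fix the run first and sample $(\calS,\Ps)$ afterward to bound each sub-event by $\MC\cdot\frac{|\calS|}{|\calA|}\cdot\frac{1}{|\calB|}$. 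Your proposal is missing both the crash-execution comparison and the ``who talks first'' decomposition; without them, the linearity-of-expectation step does not go through.
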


For example, for $n-n/\log^d(n)$ corruptions with a constant $d \ge 1$ (i.e., $\epsilon(n)=\log^{-d}(n)$), the message complexity must be $\Omega(n\cdot \log^d(n))$. For $n-\sqrt{n}$ corruptions (i.e., $\epsilon(n)=1/\sqrt{n}$), the message complexity must be $\Omega(n\cdot \sqrt{n})$.
And for $n-c$ corruptions with a constant $c>1$ (i.e., $\epsilon(n)=c/n$), the message complexity must be $\Omega(n^2)$, in particular meaning that sub-quadratic communication is impossible in this regime.

As noted, \cref{thm:lb:static} holds for any cryptographic and setup assumptions. This captures, in particular, PKI-style setup (such as the VRF-based PKI of Chan et al.\ \cite{CPS20}) in which the trusted party samples a private/public key-pair for each party and gives each party its private key together with the vector of all public keys. It additionally extends to even stronger, more involved setup assumptions for generating \emph{correlated randomness} beyond a product distribution, \eg setup for threshold signatures where parties' secret values are nontrivially correlated across one another.

\paragraph{Weakly adaptive adversary.}
The lower bound of \cref{thm:lb:static} carries over directly to the setting of weakly adaptive adversaries. Shifting back to the regime of a constant fraction of corruptions, one may naturally ask whether a protocol such as that from \cref{thm:ub:static} can also exist within this regime.

Unfortunately, given a few minutes thought one sees that a balanced protocol with polylogarithmic per-party communication as demonstrated by \cref{thm:ub:static} cannot translate to the weakly adaptive setting. The reason is that if the \emph{sender} party speaks to at most $t$ other parties, then the adaptive adversary can simply corrupt each receiving party and drop the message, thus blocking any information of the sender's input from reaching honest parties.

However, this attack applies only to the unique sender party. Indeed, non-sender parties contribute no input to the protocol to be blocked; and, without the ability to perform ``after-the-fact'' message removal, a weakly adaptive adversary cannot prevent communication from being \emph{received} by a party without a very large number of corruptions.

We therefore consider the locality of \emph{non-sender} parties, and ask whether sublinear locality is achievable. Our third result answers this question in the negative. That is, we show an efficient adversary who can force any party of its choosing to communicate with a large number of neighbors. Note that this in particular lower bounds the per-party communication complexity of non-sender parties.

\def\ThmLBAdaptive
{
Let $0<\locality < (n-1)/2$ and let $\pi$ be an $n$-party broadcast protocol secure against $t=n/2 + \locality$ adaptive corruptions.
Then, for any non-sender party $\Ps$ there exists a PPT adversary that can force the locality of $\Ps$ to be larger than $\locality$, except for negligible probability.
}
\begin{theorem}[non-sender locality facing adaptive corruptions]\label{thm:lb:adaptive}
\ThmLBAdaptive
\end{theorem}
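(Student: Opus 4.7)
The strategy is a split-brain attack that either violates broadcast agreement (contradicting the assumed security of $\pi$) or forces $\Ps$ to communicate with more than $\locality$ distinct parties. The adversary $\mathcal{A}^*$ statically corrupts $\sender$ together with a uniformly random subset $B\subseteq[n]\setminus\{\Ps,\sender\}$ of size approximately $n/2-1$, and reserves the remaining budget of $\locality$ for adaptive corruptions, for a total of $n/2+\locality=t$ corruptions. Letting $A:=[n]\setminus(\{\Ps,\sender\}\cup B)$, the adversary internally maintains two parallel simulated honest executions of the protocol: $W_0$ in which $\sender$ has input $0$, and $W_1$ in which $\sender$ has input $1$. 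Every corrupted party --- including the sender --- is instructed to behave as in $W_0$ when communicating with any party of $A$, and as in $W_1$ when communicating with $\Ps$.

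The crux is handling the ``cross-talk'' between $\Ps$ and $A$. Whenever $\mathcal{A}^*$'s internal simulation indicates that some $\Party_j\in A$ is about to send a message to $\Ps$ (as in $W_0$) or that $\Ps$ is about to send to $\Party_j$ (as in $W_1$), the adversary adaptively corrupts $\Party_j$ immediately before that round. From then on, $\Party_j$ sends $W_1$-consistent messages to $\Ps$ and $W_0$-consistent messages to other parties of $A$. Because $\mathcal{A}^*$ anticipates each adaptive corruption from its internal simulations rather than retroactively, no after-the-fact message removal is required and the attack is compatible with the weakly adaptive model. If the adaptive budget of $\locality$ is exhausted, $\mathcal{A}^*$ halts further corruptions and lets the remaining interactions happen in the clear.

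I then analyze two complementary events over the randomness of the execution. In \emph{Case~A}, the total set of $A$-parties that cross-talk with $\Ps$ has size at most $\locality$, so every such interaction was intercepted by an adaptive corruption. A round-by-round induction then shows that the view of each honest $A$-party is identically distributed to its view in $W_0$ (hence it outputs~$0$), while the view of $\Ps$ is identically distributed to its view in $W_1$ (hence it outputs~$1$); since all of these parties are honest, this violates broadcast agreement. As $\mathcal{A}^*$ is PPT and uses at most $t$ corruptions, the assumed security of $\pi$ implies that Case~A occurs with negligible probability. In \emph{Case~B}, the cross-talk set exceeds $\locality$, which directly entails that at least $\locality+1$ distinct parties of $A$ appear as neighbors of $\Ps$ in the induced communication graph (the first $\locality$ corrupted by the adversary, together with at least one later party the adversary could no longer afford to corrupt), so the locality of $\Ps$ is strictly larger than $\locality$.

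The main obstacle is justifying the round-by-round view equivalence of Case~A under the weakly adaptive restriction: the adversary must corrupt every would-be cross-talking $A$-party \emph{before} the offending message is transmitted, using only its internal simulations. This is achievable because throughout Case~A both $W_0$ and $W_1$ remain faithful to the actual execution round-by-round, so $\mathcal{A}^*$ can look one round ahead in the simulations and schedule each adaptive corruption at precisely the right round. A minor subtlety is that $\sender$ is itself corrupted in our attack, so broadcast security must be invoked through the \emph{agreement} property rather than \emph{validity} --- but agreement is exactly what breaks when $\Ps$ outputs~$1$ while the honest $A$-parties output~$0$.
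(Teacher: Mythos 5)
There is a genuine gap, and it sits exactly where your plan leans hardest: the round-by-round claim in Case~A that the honest $A$-parties' views match $W_0$ and that $\Ps$'s view matches $W_1$. Your reference worlds $W_0,W_1$ are \emph{all-honest} executions, but in the real attack the adversary cannot make the real execution look like them. Concretely, from round $2$ onward the $W_1$-copies of the $A$-parties may send messages to $\Ps$ that the real (honest) $A$-parties --- whose actual views follow the $W_0$-side story --- never send; the adversary cannot forge these messages on behalf of honest parties, and it cannot afford to corrupt every party that talks to $\Ps$ in $W_1$ (that set is not under its control, is not counted by your ``cross-talk'' set, and can exceed the residual budget $\locality$). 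Symmetrically, honest $A$-parties that in $W_0$ would \emph{receive} a message from $\Ps$ receive nothing in the real run (the real $\Ps$ plays the $W_1$ side), so their views also diverge from $W_0$. Hence the Case~A contradiction with agreement does not go through; and if you enlarge the cross-talk set to cover these divergences, Case~B collapses, because corruptions triggered by simulated (rather than actual) events no longer certify real communication edges of $\Ps$, so ``cross-talk $>\locality$'' no longer implies locality $>\locality$.

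The paper's proof avoids this by never using all-honest reference worlds. Each side is fed a story in which the \emph{other} half of the parties has crashed, so the adversary never has to fabricate or suppress messages of honest parties: the only real-versus-story mismatch it must police is an actual message from $\Ps$ arriving at an honest party, and each such event is a genuine edge at $\Ps$ neutralized by one corruption, so ``locality of $\Ps$ at most $\locality$'' really does keep both stories perfectly consistent. Two further ingredients you are missing are then essential: (i) the adversary flips a random bit $b$ deciding which half $\calS_{1-b}$ to corrupt, which makes $\Ps$'s view (conditioned on low locality) \emph{identically distributed} for $b=0$ and $b=1$, so $\Ps$'s output cannot correlate with the side that is actually honest; and (ii) the outputs of the remaining honest parties are pinned down not via validity in an all-honest world but via validity in a reference execution with a crash adversary that applies the \emph{same} corrupt-on-contact-with-$\Ps$ rule (here the hypothesis $\locality<(n-1)/2$ guarantees at least one honest party survives), so that the view-identities genuinely hold, and agreement in the main attack then transfers that pinned-down output to $\Ps$. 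Your skeleton (split-brain, reserve $\locality$ corruptions to police contact with $\Ps$, low/high cross-talk case split) is in the right spirit, but without the crash-story framing and the symmetrizing bit $b$ the central view-equivalence claims are false as stated.
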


For example, for $k\in\Theta(n)$, \eg a constant fraction $t=0.51 \cdot n$ of corruptions, the locality of $\Ps$ must be $\Theta(n)$, thus forming a separation from \cref{thm:ub:static} for the locality of non-sender parties. Similarly to \cref{thm:lb:static}, this bound holds in the presence of any correlated-randomness setup and for any cryptographic assumptions.

We remark that our bound further indicates a design requirement for any protocol attempting to achieve sub-quadratic $o(n^2)$ communication complexity within this setting. To obtain $o(n^2)$ communication, it must of course be that nearly all parties have sublinear communication locality.
Our result shows that any such protocol must include instructions causing a party to send out messages to a linear number of other parties upon determining that it is under attack.

\paragraph{Summary.}
For completeness, \cref{tbl:intro:summary} summarizes our results alongside prior work.

\renewcommand{\arraystretch}{1.3}

\begin{table}[!htb]
\begin{center}
\begin{footnotesize}
\begin{tabular}{@{}l|ll|lll@{}}
    \toprule
    ~ &
    \textbf{setup} &
    \textbf{corruptions} &
    \textbf{total com.} &
    \Centerstack[l]{\textbf{locality}\\\textbf{(non-sender)}} &
    \textbf{ref.} \\

    \midrule
    \midrule
    \Centerstack[l]{\textbf{strongly}\\\textbf{adaptive}} &
    \Centerstack[l]{bare pki\\any} &
    \Centerstack[l]{$t<n$\\$t=\Theta(n)$} &
    \Centerstack[l]{$O(n^3)$\\$\Omega(n^2)$} &
    \Centerstack[l]{$n$\\$\Omega(n)$} &
    \Centerstack[l]{\cite{DS83}\\\cite{ACDNPRS19}} \\

    \midrule
    \Centerstack[l]{\textbf{weakly}\\\textbf{adaptive}} &
    \Centerstack[l]{trusted pki\\ any} &
    \Centerstack[l]{$t=\Theta(n)$\\ $t=\Theta(n)$} &
    \Centerstack[l]{$\tilde O(n^2)$\\ ~} &
    \Centerstack[l]{$O(n)$\\ $\Omega(n)$} &
    \Centerstack[l]{\cite{CPS20}\\ Thm.~\ref{thm:lb:adaptive}} \\

    \midrule
    ~ &
    \Centerstack[l]{any\\ (deterministic)} &
    $t=\Theta(n)$ &
    $\Omega(n^2)$ &
    $\Omega(n)$ &
    \cite{DR85} \\

    ~ &
    bare pki &
    $t=\Theta(n)$ &
    $\tilde O(n^2)$ &
    $\tilde{O}(1)$ & 
    \cite{TLP22} \\

    \textbf{static} &
    trusted pki &
    $t=\Theta(n)$ &
    $\tilde O(n)$ &
    $\tilde{O}(1)$ & 
    Prop.~\ref{thm:ub:static} \\

    ~ &
    any &
    \Centerstack[l]{
    $t=(1-\epsilon(n))\cdot n$, $\epsilon(n)\in o(1)$\\
    \eg $t=n-\frac{n}{\polylog(n)}$\\
    \eg $t=n-\sqrt{n}$\\
    \eg $t=n-O(1)$
    } &
    \Centerstack[l]{
    $\Omega(n\cdot \frac1{\epsilon(n)})$\\
    $\Omega(n\cdot \polylog(n))$\\
    $\Omega(n\cdot \sqrt n)$\\
    $\Omega(n^2)$
    } &
    ~ &
    Thm.~\ref{thm:lb:static} \\

    \bottomrule
\end{tabular}\ \\[1ex]

\caption{\footnotesize{Communication requirements of dishonest-majority (synchronous) broadcast. We consider the standard, property-based definition of broadcast (see Definition~\ref{def:property-bc}). For each type of adversary (strongly adaptive, weakly adaptive, and static), we consider the state-of-the-art protocols and lower bounds in terms of setup, number of corruptions, total communication and (non-sender) locality. For setup we distinguish bare PKI, where each party locally generates its key pair, as opposed to trusted PKI, where all keys are generated by a trusted dealer. Reference \cite{DR85} is only for deterministic protocols.}}
\label{tbl:intro:summary}
\end{footnotesize}
\end{center}
\vspace*{-5ex}
\end{table}

\subsection{Technical Overview}\label{sec:intro:tech}

The proof of \cref{thm:ub:static} follows almost immediately from \cite{CPS20} and \cite{TLP22}. We therefore focus on our lower bounds.

\paragraph{Communication lower bound for static corruptions.}
The high-level idea of the attack underlying \cref{thm:lb:static} is to split all parties except for the sender $\sender$ into two equal-size subsets, $\calA$ and $\calB$, randomly choose a set $\calS$ of size $\epsilon(n)-1$ parties in $\calA$ and a party $\Ps\in\calB$, and corrupt all parties but $\calS\cup\sset{\Ps}$ (as illustrated in \cref{lb-figure}).
The adversary proceeds by running two independent executions of the protocol. In the first, the sender runs an execution on input $0$ towards $\calA$, and all corrupted parties in $(\sset{\sender}\cup\calA)\setminus\calS$ ignore all messages from parties in $\calB$ (pretending they all crashed). In the second, the sender runs an execution on input~$1$ towards $\calB$, and all corrupted parties in $(\sset{\sender}\cup\calB)\setminus\sset{\Ps}$ ignore all messages from parties in $\calA$.

\begin{figure}[!htb]
\centering
\tikzset{every picture/.style={line width=0.75pt}} 

\begin{tikzpicture}[x=0.75pt,y=0.75pt,yscale=-1,xscale=1]

\draw   (140,50) -- (470,50) -- (470,260) -- (140,260) -- cycle ;

\draw  [fill={rgb, 255:red, 217; green, 217; blue, 217 }  ,fill opacity=1 ] (320,135.4) .. controls (320,121.37) and (331.37,110) .. (345.4,110) -- (421.6,110) .. controls (435.63,110) and (447,121.37) .. (447,135.4) -- (447,224.6) .. controls (447,238.63) and (435.63,250) .. (421.6,250) -- (345.4,250) .. controls (331.37,250) and (320,238.63) .. (320,224.6) -- cycle ;
\draw  [fill={rgb, 255:red, 217; green, 217; blue, 217 }  ,fill opacity=1 ] (163,135.4) .. controls (163,121.37) and (174.37,110) .. (188.4,110) -- (264.6,110) .. controls (278.63,110) and (290,121.37) .. (290,135.4) -- (290,224.6) .. controls (290,238.63) and (278.63,250) .. (264.6,250) -- (188.4,250) .. controls (174.37,250) and (163,238.63) .. (163,224.6) -- cycle ;
\draw  [fill={rgb, 255:red, 255; green, 255; blue, 255 }  ,fill opacity=1 ] (390,175) .. controls (390,172.24) and (392.24,170) .. (395,170) .. controls (397.76,170) and (400,172.24) .. (400,175) .. controls (400,177.76) and (397.76,180) .. (395,180) .. controls (392.24,180) and (390,177.76) .. (390,175) -- cycle ;
\draw  [fill={rgb, 255:red, 255; green, 255; blue, 255 }  ,fill opacity=1 ] (183,182) .. controls (183,175.37) and (188.37,170) .. (195,170) -- (261,170) .. controls (267.63,170) and (273,175.37) .. (273,182) -- (273,218) .. controls (273,224.63) and (267.63,230) .. (261,230) -- (195,230) .. controls (188.37,230) and (183,224.63) .. (183,218) -- cycle ;
\draw  [fill={rgb, 255:red, 217; green, 217; blue, 217 }  ,fill opacity=1 ] (300,75) .. controls (300,72.24) and (302.24,70) .. (305,70) .. controls (307.76,70) and (310,72.24) .. (310,75) .. controls (310,77.76) and (307.76,80) .. (305,80) .. controls (302.24,80) and (300,77.76) .. (300,75) -- cycle ;

\draw (311,62.4) node [anchor=north west][inner sep=0.75pt]  [font=\large]  {$\Party_{s}$};
\draw (171,118.4) node [anchor=north west][inner sep=0.75pt]  [font=\large]  {$\mathcal{A}$};
\draw (412.5,164.5) node  [font=\large]  {$\Ps$};
\draw (194,178.4) node [anchor=north west][inner sep=0.75pt]  [font=\large]  {$\mathcal{S}$};
\draw (331,118.4) node [anchor=north west][inner sep=0.75pt]  [font=\large]  {$\mathcal{B}$};

\draw (281,271) node [anchor=north west][inner sep=0.75pt]   [align=left] {Corrupt};
\draw (391,271) node [anchor=north west][inner sep=0.75pt]   [align=left] {Honest};
\draw (197,271) node [anchor=north west][inner sep=0.75pt]   [align=left] {Key:};
\draw  [fill={rgb, 255:red, 217; green, 217; blue, 217 }  ,fill opacity=1 ] (250,269) -- (270,269) -- (270,289) -- (250,289) -- cycle;
\draw  [fill={rgb, 255:red, 255; green, 255; blue, 255 }  ,fill opacity=1 ] (360,270) -- (380,270) -- (380,290) -- (360,290) -- cycle;
\end{tikzpicture}

\caption{The partition of the parties used in the proof of \cref{thm:lb:static}. All the parties except for the sender $\Party_s$ are partitioned into two equal-size sets, $\calA$ and $\calB$. A subset $\calS\subseteq \calA$ of size $\epsilon(n)\cdot n-1$ and a party $\Ps\in\calB$ are uniformly sampled, and the adversary statically corrupts all parties but $\calS\cup\sset{\Ps}$. The goal of the attack is to ensure that $\calS$ and $\Ps$ do not communicate, thus forcing them to output different bits.
}
\label{lb-figure}
\end{figure}

As long as the honest parties in $\calS$ and the honest party $\Ps$ do not communicate, the adversary will make them output different values. This holds because, conditioned on no communication between $\calS$ and $\Ps$, the view of honest parties in $\calS$ is indistinguishable from a setting where the adversary crashes all parties in $\calB$ and an honest sender has input~$0$; in this case, all parties in $\calA$ (and in particular in $\calS$) must output $0$. Similarly, conditioned on no communication between $\calS$ and $\Ps$, the view of $\Ps$ is indistinguishable from a setting where the adversary crashes all parties in $\calA$ and an honest sender has input $1$; in this case, all parties in $\calB$ (and in particular $\Ps$) must output $1$.

The challenge now is to argue that the honest parties in $\calS$ and the honest party $\Ps$ do not communicate with noticeable probability. Note that this does not follow trivially from the overall low communication complexity, as the communication patterns unfold as a function of the adversarial behavior, which in particular depends on the choice of $\calS$ and $\Ps$. The argument instead follows from a series a delicate steps that compare the view of parties in this execution with other adversarial strategies.

The underlying trick is to analyze the event of communication between $\calS$ and $\Ps$ by splitting into two sub-cases: when $\calS$ speaks to $\Ps$ \emph{before} receiving any message from $\Ps$, and when $\Ps$ speaks to $\calS$ \emph{before} receiving any message from $\calS$. (Note, these events are not disjoint.) The important observation is that before any communication is received by the other side, then each side's view in the attack is identically distributed as in a hypothetical execution in which the corresponding set $\calA$ or $\calB$ crashes from the start.
Since these simple crash adversarial strategies are indeed independent of $\calS$ and $\Ps$, then we can easily analyze and upper bound the probability of $\calS$ and $\Ps$ communicating within their hypothetical executions. To finalize the argument, we carry this analysis over to show that with noticeable probability $\Ps$ does not communicate with~$\calS$ in an actual execution with the original adversary.

\paragraph{Locality lower bound for weakly adaptive corruptions.}

We proceed to consider the setting of \emph{weakly adaptive} corruptions. As mentioned above, in the adaptive setting it is easy to see that the sender must communicate with many parties, since otherwise the adversary may crash every party that the sender communicates with; therefore, the challenging part is to focus on non-sender parties. Further, when considering strong adaptive adversaries that can perform after-the-fact message removal by corrupting the sender, every honest party must communicate with a linear number of parties \cite{ACDNPRS19}. In our setting, we do not consider such capabilities of the adversary. In particular, once the adversary learns that an honest party has chosen to send a message, this message cannot be removed or changed.

Unlike our previous lower bound which assumed $n-o(n)$ corruptions, here we consider $n/2+k$ corruptions for $k\in O(n)$, so we cannot prevent sets of honest parties from communicating with each other.
Our approach, instead, is to keep the targeted party $\Ps$ confused about the output of other honest parties.

More concretely, our adversarial strategy splits all parties but the sender and $\Ps$ into disjoint equal-size sets $\calS_0$ and $\calS_1$ of parties, samples a random bit $b$ and corrupts the sender party and the parties in $\calS_{1-b}$. The adversary communicates with $\calS_0$ as if the sender's input is~$0$ and all parties in $\calS_1$ have crashed, and at the same time plays towards $\calS_1$ as if the sender's input is $1$ and all parties in $\calS_0$ have crashed (as illustrated in \cref{adaptive-lb-figure}).
Although the adversary cannot prevent honest parties from $\calS_b$ from sending messages to the targeted party $\Ps$, it can corrupt every party that \emph{receives} a message from $\Ps$. The effect of this attack is that, although $\Ps$ can tell that the sender is cheating, $\Ps$  cannot know whether parties in $\calS_0$ or parties in $\calS_1$ are honest.  And, moreover, $\Ps$ cannot know whether the remaining honest parties \emph{know} that the sender is cheating or if they believe that the sender is honest and other parties crashed---in which case they must output a bit (either $0$ if $\calS_0$ are honest or $1$ if $\calS_1$ are honest). To overcome this attack, $\Ps$ must communicate with sufficiently many parties such that the adversary's corruption budget will run out, \ie with output locality at least $k$.

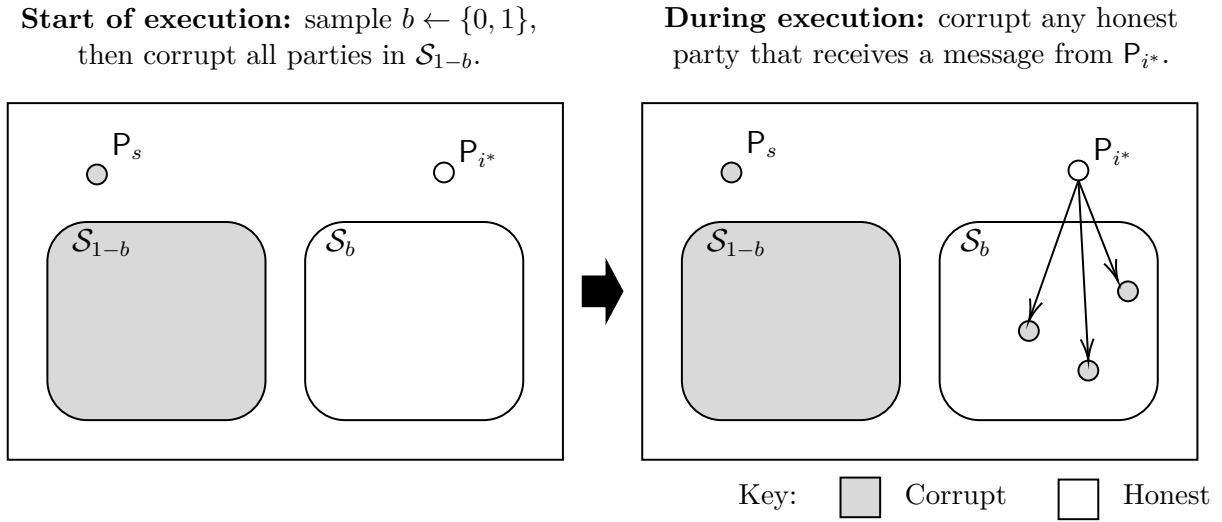
\begin{figure}[!htb]
\centering

\tikzset{every picture/.style={line width=0.75pt}} 

\begin{tikzpicture}[x=0.75pt,y=0.75pt,yscale=-1,xscale=1]

\draw  [fill={rgb, 255:red, 217; green, 217; blue, 217 }  ,fill opacity=1 ] (450,269) -- (470,269) -- (470,289) -- (450,289) -- cycle ;
\draw  [fill={rgb, 255:red, 255; green, 255; blue, 255 }  ,fill opacity=1 ] (560,270) -- (580,270) -- (580,290) -- (560,290) -- cycle ;
\draw  [fill={rgb, 255:red, 217; green, 217; blue, 217 }  ,fill opacity=1 ] (390,115) .. controls (390,112.24) and (392.24,110) .. (395,110) .. controls (397.76,110) and (400,112.24) .. (400,115) .. controls (400,117.76) and (397.76,120) .. (395,120) .. controls (392.24,120) and (390,117.76) .. (390,115) -- cycle ;
\draw  [fill={rgb, 255:red, 255; green, 255; blue, 255 }  ,fill opacity=1 ] (565,114) .. controls (565,111.24) and (567.24,109) .. (570,109) .. controls (572.76,109) and (575,111.24) .. (575,114) .. controls (575,116.76) and (572.76,119) .. (570,119) .. controls (567.24,119) and (565,116.76) .. (565,114) -- cycle ;
\draw   (350,80) -- (630,80) -- (630,260) -- (350,260) -- cycle ;
\draw  [fill={rgb, 255:red, 255; green, 255; blue, 255 }  ,fill opacity=1 ] (500,160) .. controls (500,148.95) and (508.95,140) .. (520,140) -- (590,140) .. controls (601.05,140) and (610,148.95) .. (610,160) -- (610,220) .. controls (610,231.05) and (601.05,240) .. (590,240) -- (520,240) .. controls (508.95,240) and (500,231.05) .. (500,220) -- cycle ;
\draw  [fill={rgb, 255:red, 217; green, 217; blue, 217 }  ,fill opacity=1 ] (540,195) .. controls (540,192.24) and (542.24,190) .. (545,190) .. controls (547.76,190) and (550,192.24) .. (550,195) .. controls (550,197.76) and (547.76,200) .. (545,200) .. controls (542.24,200) and (540,197.76) .. (540,195) -- cycle ;
\draw  [fill={rgb, 255:red, 217; green, 217; blue, 217 }  ,fill opacity=1 ] (570,215) .. controls (570,212.24) and (572.24,210) .. (575,210) .. controls (577.76,210) and (580,212.24) .. (580,215) .. controls (580,217.76) and (577.76,220) .. (575,220) .. controls (572.24,220) and (570,217.76) .. (570,215) -- cycle ;
\draw  [fill={rgb, 255:red, 217; green, 217; blue, 217 }  ,fill opacity=1 ] (590,175) .. controls (590,172.24) and (592.24,170) .. (595,170) .. controls (597.76,170) and (600,172.24) .. (600,175) .. controls (600,177.76) and (597.76,180) .. (595,180) .. controls (592.24,180) and (590,177.76) .. (590,175) -- cycle ;
\draw    (570,119) -- (545.66,188.11) ;
\draw [shift={(545,190)}, rotate = 289.4] [color={rgb, 255:red, 0; green, 0; blue, 0 }  ][line width=0.75]    (10.93,-3.29) .. controls (6.95,-1.4) and (3.31,-0.3) .. (0,0) .. controls (3.31,0.3) and (6.95,1.4) .. (10.93,3.29)   ;
\draw    (570,119) -- (574.89,208) ;
\draw [shift={(575,210)}, rotate = 266.86] [color={rgb, 255:red, 0; green, 0; blue, 0 }  ][line width=0.75]    (10.93,-3.29) .. controls (6.95,-1.4) and (3.31,-0.3) .. (0,0) .. controls (3.31,0.3) and (6.95,1.4) .. (10.93,3.29)   ;
\draw    (570,119) -- (589.27,168.14) ;
\draw [shift={(590,170)}, rotate = 248.59] [color={rgb, 255:red, 0; green, 0; blue, 0 }  ][line width=0.75]    (10.93,-3.29) .. controls (6.95,-1.4) and (3.31,-0.3) .. (0,0) .. controls (3.31,0.3) and (6.95,1.4) .. (10.93,3.29)   ;
\draw  [fill={rgb, 255:red, 217; green, 217; blue, 217 }  ,fill opacity=1 ] (370,160) .. controls (370,148.95) and (378.95,140) .. (390,140) -- (460,140) .. controls (471.05,140) and (480,148.95) .. (480,160) -- (480,220) .. controls (480,231.05) and (471.05,240) .. (460,240) -- (390,240) .. controls (378.95,240) and (370,231.05) .. (370,220) -- cycle ;
\draw   (30,80) -- (310,80) -- (310,260) -- (30,260) -- cycle ;
\draw  [fill={rgb, 255:red, 255; green, 255; blue, 255 }  ,fill opacity=1 ] (180,160) .. controls (180,148.95) and (188.95,140) .. (200,140) -- (270,140) .. controls (281.05,140) and (290,148.95) .. (290,160) -- (290,220) .. controls (290,231.05) and (281.05,240) .. (270,240) -- (200,240) .. controls (188.95,240) and (180,231.05) .. (180,220) -- cycle ;
\draw  [fill={rgb, 255:red, 217; green, 217; blue, 217 }  ,fill opacity=1 ] (50,160) .. controls (50,148.95) and (58.95,140) .. (70,140) -- (140,140) .. controls (151.05,140) and (160,148.95) .. (160,160) -- (160,220) .. controls (160,231.05) and (151.05,240) .. (140,240) -- (70,240) .. controls (58.95,240) and (50,231.05) .. (50,220) -- cycle ;
\draw  [fill={rgb, 255:red, 217; green, 217; blue, 217 }  ,fill opacity=1 ] (70,116) .. controls (70,113.24) and (72.24,111) .. (75,111) .. controls (77.76,111) and (80,113.24) .. (80,116) .. controls (80,118.76) and (77.76,121) .. (75,121) .. controls (72.24,121) and (70,118.76) .. (70,116) -- cycle ;
\draw  [fill={rgb, 255:red, 255; green, 255; blue, 255 }  ,fill opacity=1 ] (245,115) .. controls (245,112.24) and (247.24,110) .. (250,110) .. controls (252.76,110) and (255,112.24) .. (255,115) .. controls (255,117.76) and (252.76,120) .. (250,120) .. controls (247.24,120) and (245,117.76) .. (245,115) -- cycle ;
\draw  [fill={rgb, 255:red, 0; green, 0; blue, 0 }  ,fill opacity=1 ] (320,167.5) -- (332,167.5) -- (332,160) -- (340,175) -- (332,190) -- (332,182.5) -- (320,182.5) -- cycle ;

\draw (167.05,47.29) node   [align=left] {\begin{minipage}[lt]{200pt}\setlength\topsep{0pt}
\begin{center}
\textbf{Start of execution:} sample $\displaystyle b\leftarrow \{0,1\}$, then corrupt all parties in $\displaystyle \calS_{1-b}$.
\end{center}

\end{minipage}};
\draw (481,271) node [anchor=north west][inner sep=0.75pt]   [align=left] {Corrupt};
\draw (591,271) node [anchor=north west][inner sep=0.75pt]   [align=left] {Honest};
\draw (397,271) node [anchor=north west][inner sep=0.75pt]   [align=left] {Key:};
\draw (490.79,47.29) node   [align=left] {\begin{minipage}[lt]{202.36pt}\setlength\topsep{0pt}
\begin{center}
\textbf{During execution:} corrupt any honest party that receives a message from $\displaystyle \Ps$.
\end{center}

\end{minipage}};
\draw (401,92.4) node [anchor=north west][inner sep=0.75pt]  [font=\large]  {$\Party_{s}$};
\draw (381,142.4) node [anchor=north west][inner sep=0.75pt]  [font=\large]  {$\mathcal{S}_{1-b}$};
\draw (587.5,104.5) node  [font=\large]  {$\Ps$};
\draw (509,142.4) node [anchor=north west][inner sep=0.75pt]  [font=\large]  {$\mathcal{S}_{b}$};
\draw (61,142.4) node [anchor=north west][inner sep=0.75pt]  [font=\large]  {$\mathcal{S}_{1-b}$};
\draw (189,142.4) node [anchor=north west][inner sep=0.75pt]  [font=\large]  {$\mathcal{S}_{b}$};
\draw (81,93.4) node [anchor=north west][inner sep=0.75pt]  [font=\large]  {$\Party_{s}$};
\draw (267.5,105.5) node  [font=\large]  {$\Ps$};
\end{tikzpicture}

\caption{The partition of the parties used in the proof of Theorem~\ref{thm:lb:adaptive}. At the start of the execution, the adversary partitions all parties except for the sender $\Party_s$ and targeted party $\Ps$ into equal-size sets $\calS_0$ and $\calS_1$, and corrupts the parties in $\calS_{1-b}$ (for a random bit $b$). During the execution, whenever an honest party $\Party_j\in\calS_b$ receives a message from the targeted party $\Ps$, the adversary corrupts $\Party_j$.
}
\label{adaptive-lb-figure}
\end{figure}


\subsection{Further Related Work}

Since the classical results from the '80s, a significant line of work has been devoted to understanding the complexity of broadcast protocols.\footnote{In this work we consider broadcast protocols that achieve the usual properties of \emph{termination}, \emph{agreement}, and \emph{validity}. We note that stronger notions of broadcast have been considered in the literature, \eg in the adaptive setting, the works of \cite{HZ10,GKKZ11,CGZ23} study \emph{corruption fairness} ensuring that once any receiver learns the sender's input, the adversary cannot corrupt the sender and change its message. As our main technical contributions are lower bounds, focusing on weaker requirements yields stronger results.}

\paragraph{Communication complexity.}

In the honest-majority regime, we know of several protocols, deterministic \cite{BGP92, CW92, MR21} or randomized \cite{Micali17, FLL21}, that match the known lower bounds \cite{DR85, ACDNPRS19} for strongly adaptive adversaries. When considering static, or weakly adaptive security, a fruitful line of works achieved sub-quadratic communication, with information-theoretic security \cite{KSSV06,KS09,KS11,BGH13} or with computational security \cite{CM19,ACDNPRS19,CKS20,BKLL20,BCG21} .

In the dishonest-majority regime, the most communication-efficient broadcast constructions are based on the protocol of Dolev and Strong~\cite{DS83}. This protocol is secure facing any number of strongly adaptive corruptions and the communication complexity is $O(n^3)$. When considering weakly adaptive corruptions, Chan et al.\ \cite{CPS20} used cryptography and trusted setup to dynamically elect a small, polylog-size committee in each round and improved the communication to $\tilde{O}(n^2)$. In the static-corruption setting, Tsimos et al.\ \cite{TLP22} achieved $\tilde{O}(n^2)$ communication by running the protocol of \cite{DS83} over a ``gossiping network'' \cite{DGHILSSST87,KSSV00}. This work further achieved \emph{amortized} sub-quadratic communication facing weakly adaptive corruptions when all parties broadcast in parallel.

A line of works focused on achieving \emph{balanced} protocols, where all parties incur the same work in terms of communication complexity \cite{KS11,BCG21,ADDRVXZ22}. The work of \cite{BCG21} also showed lower bounds on the necessary setup and cryptographic assumptions to achieve balanced protocols when extending almost-everywhere agreement to full agreement.\footnote{\emph{Almost-everywhere agreement} \cite{DPPU88} is a relaxed problem in which all but an $o(1)$ fraction of the parties must reach agreement. For this relaxation, King et al.\ \cite{KSSV06} showed an efficient protocol, with poly-logarithmic locality, communication, and rounds. This protocol serves as a stepping stone to several sub-quadratic Byzantine agreement protocols, by extending almost-everywhere agreement to \emph{full agreement} \cite{KSSV06,KS09,KS11,BGH13,BCG21}.}
Message dissemination protocols \cite{DXR21,LMT22} have also been proven useful for constructing balanced protocols.

The work in \cite{HKK08} showed that without trusted setup assumptions, at least one party must send $\Omega(n^{1/3})$ messages, in the \emph{static filtering} model, where each party must decide which set of parties it will accept messages from in each round before the rounds begins. We remark that our lower bounds hold also given trusted setup, and in the dynamic-filtering model (in which sub-quadratic upper bounds have been achieved).

\paragraph{Connectivity.}
Obtaining communication-efficient protocols inherently relies on using a strict subgraph of the communication network.
Early works \cite{Dolev82,FLM86} showed that \emph{deterministic} broadcast is possible in an incomplete graph only if the graph is $(t+1)$-connected. The influential work of King et al.\ \cite{KSSV06} laid a path not only for randomized Byzantine agreement with sub-quadratic communication, but also for protocols that run over a partial graph \cite{KS09,KS11,BGH13,BCG21}. The graphs induced by those protocols yield expander graphs, and the work of \cite{BCDH18} showed that in the strongly adaptive setting and facing a linear number of corruptions, no protocol for all-to-all broadcast in the plain model (without PKI setup) can maintain a non-expanding communication graph against all adversarial strategies. Further, feasibility of broadcast with a non-expander communication graph, admitting a sub-linear cut, was demonstrated in weaker settings \cite{BCDH18}.

\paragraph{Round complexity.}
In terms of round complexity, when considering deterministic protocols, $t+1$ rounds are known to be sufficient \cite{PSL80,DS83,GM93} and necessary \cite{FL82,DS83}.
Ben-Or \cite{BenOr83} and Rabin \cite{Rabin83}, showed that this lower bound can be overcome using randomization.
In the case of fixed-round protocols, the works of \cite{FM97,Micali17, FLL21} showed protocols achieving $2^{-r}$ error within $O(r)$ rounds. On the other hand, Karlin and Yao \cite{KY84} and Chor, Merritt and Shmoys \cite{CMS89} showed that any $r$-round protocol incurs an error probability of $r^{-r}$ when the number of corruptions is linear, a bound that has recently been matched by Ghinea, Goyal and Liu-Zhang \cite{GGL22}.
For protocols with probabilistic termination, randomized broadcast with expected-constant number of rounds was achieved in the honest-majority setting \cite{FM97,FG03,KK06}, even under composition \cite{CCGZ19,CCGZ21}. It was further shown that two rounds are unlikely to suffice for reaching agreement, even with weak guarantees, as long as $t > n/4$ \cite{CHM22} (as opposed to \emph{three} rounds \cite{Micali17}).
In the dishonest-majority setting, there are sublinear-round broadcast protocols \cite{GKKO07,FN09,CPS20,WXDS20}, and even expected-constant-round protocols \cite{WXSD20,SLMNPT23}. These results match the lower bound of $\Omega(n/(n-t))$ rounds (allowing up to constant failure probability) \cite{GKKO07}.

\subsection*{Outline of Paper}
The paper is organized as follows. In \cref{sec:Preliminaries}, preliminary content including  notations, security, and network model is introduced. In \cref{sec:lb:static}, we present  the message-complexity lower bound for static corruptions. In \cref{sec:lb:adaptive}, we present the locality lower bound for weakly adaptive corruptions. Finally, in \cref{sec:floodbroadcast}, we describe a statically secure broadcast protocol with sub-quadratic communication and poly-logarithmic locality.

\section{Preliminaries}\label{sec:Preliminaries}

In this section, we present the security model and preliminary definitions.

\paragraph{Notations.}\label{sec:notations}
We use calligraphic letters to denote sets or distributions (\eg $\calS$), uppercase for random variables (\eg $R$), lowercase for values (\eg $r$), and sans-serif (\eg \textsf{A}) for algorithms (\ie Turing machines).
For $n\in\mathbb{N}$, let $[n]=\sset{1,\ldots,n}$.
Let $\poly$ denote the set all positive polynomials and let PPT denote a probabilistic (interactive) Turing machines that runs in \emph{strictly} polynomial time.
We denote by $\secParam$ the security parameter.
A function $\nu \colon \mathbb{N} \mapsto [0,1]$ is \emph{negligible}, denoted $\nu(\secParam) = \negl(\secParam)$, if $\nu(\secParam)<1/p(\secParam)$ for every $p\in\poly$ and sufficiently large $\secParam$. Moreover, we say that $\nu \colon \mathbb{N} \mapsto [0,1]$ is \emph{noticeable} if $\nu(\secParam)\ge 1/p(\secParam)$ for some $p\in\poly$ and sufficiently large $\secParam$.
When using the $\tilde{O}(n)$ notation, polynomial factors in $\log(n)$ and the security parameter $\secParam$ are omitted.

\paragraph{Protocols.}
All protocols considered in this paper are PPT (probabilistic polynomial time): the running time of every party is polynomial in the (common) security parameter $\secParam$, given as a unary string. For simplicity, we consider Boolean-input Boolean-output protocols, where apart from the common security parameter, a designated sender $\Party_s$ has a single input bit, and each of the honest parties outputs a single bit. We note that our protocols can be used for broadcasting longer strings, with an additional dependency of the communication complexity on the input-string length.

As our main results are lower bounds, we consider protocols in the \emph{correlated randomness} model; that is, prior to the beginning of the protocol $\pi$ a trusted dealer samples values $(r_1,\ldots,r_n)\gets\calD_\pi$ from an efficiently sampleable known distribution $\calD_\pi$ and gives the value $r_i$ to party $\Party_i$. This model captures, for example, a trusted PKI setup for digital signatures and verifiable random functions (VRFs), where the dealer samples a public/private keys for each party and hands to each $\Party_i$ its secret key and a vector of all public keys; this is the setup needed for our upper bound result in \cref{sec:ub}.
The model further captures more involved distributions, such as setup for \emph{threshold signatures}, information-theoretic PKI \cite{PW92}, pairwise correlations for \emph{oblivious transfer} \cite{Beaver95}, and more.

We define the \emph{view} of a party $\Party_i$ as its setup information $r_i$, its random coins, possibly its input (in case $\Party_i$ is the sender), and its set of all messages received during the protocol.

\paragraph{Communication model.}

The communication model that we consider is \emph{synchronous}, meaning that protocols proceed in rounds. In each round every party can send a message to every other party over an authenticated channel, where the adversary can see the content of all transmitted messages, but cannot drop/inject messages. We emphasize that our lower bounds hold also in the private-channel setting which can be established over authenticated channels using public-key encryption and a PKI setup; our protocol construction only requires authenticated channels. It is guaranteed that every message sent in a round will arrive at its destination by the end of that round. The adversary is \emph{rushing} in the sense that it can use the messages received by corrupted parties from honest parties in a given round to determine the corrupted parties' messages for that round.

\paragraph{Adversary model.}

The adversary runs in probabilistic polynomial time and may corrupt a subset of the parties and instruct them to behave in an arbitrary (malicious) manner.
Some of our results (the lower bound in \cref{sec:lb:static} and the protocol in \cref{sec:ub})
consider a static adversary that chooses which parties to corrupt \emph{before} the beginning of the protocol, \ie \emph{before} the setup information is revealed to the parties. Note that this strengthens the lower bound, but provides a weaker feasibility result. Our second lower bound (\cref{sec:lb:adaptive}) considers an adaptive adversary that can choose which parties to corrupt during the course of the protocol, based on information it dynamically learns. We consider the \emph{atomic-multisend model} (also referred to as a weakly adaptive adversary), meaning that once a party $\Party_i$ starts sending messages in a given round, it cannot be corrupted until it completes sending all messages for that round, and every message sent by $\Party_i$ is delivered to its destination. This is weaker than the standard model for adaptive corruptions \cite{Thesis:Feldman88,CFGN96,Canetti01} (also referred to as a strongly rushing adversary), which enables the adversary to corrupt a party at any point during the protocol and drop/change messages that were not delivered yet. Again, we note that the weaker model we consider yields a stronger lower bound. Further, in the stronger model, a result by Abraham et al.\ \cite{ACDNPRS19} rules out sub-quadratic protocols with linear resiliency, even in the honest-majority setting.

\paragraph{Broadcast.}\label{sec:bc}
We consider the standard, property-based definition of broadcast.
\begin{definition}[Broadcast protocol]\label{def:property-bc}
An $n$-party protocol $\pi$, where a distinguished sender $\Party_s$ holds an initial input message $x\in\zo$, is a \textsf{broadcast protocol} secure against $t$ corruptions, if the following conditions are satisfied for any PPT adversary that corrupts up to $t$ parties:
\begin{itemize}
\item
\textbf{Termination:}
There exists an a-priori-known round $R$ such that the protocol is guaranteed to complete within $R$ rounds (\ie every so-far honest party produces an output value).
\item
\textbf{Agreement:}
For every pair of parties $\Party_i$ and $\Party_j$ that are honest at the end of the protocol, if party $\Party_i$ outputs $y_i$ and party $\Party_j$ outputs $y_j$, then $y_i=y_j$ with all but negligible probability in $\secParam$.
\item
\textbf{Validity:}
If the sender is honest at the end of the protocol, then for every party $\Party_i$ that is honest at the end of the protocol, if $\Party_i$ outputs $y_i$ then $y_i=x$ with all but negligible probability in $\secParam$.
\end{itemize}
\end{definition}

\newcommand{\OutEdges}{\mathsf{OutEdges}}
The communication locality \cite{BGT13,BCDH18} of a protocol corresponds to the maximal degree of any honest party in the communication graph induced by the protocol execution. While defining the incoming communication edges to a party can be subtle (as adversarial parties may ``spam'' honest parties; see \eg a discussion in \cite{BCDH18}), out-edges of honest parties are clearly identifiable from the protocol execution.
In this paper, we will focus on this simpler notion of output-locality, and use the terminology \emph{locality} of the protocol to simply refer to this value.
Our results provide a lower bound on output locality of given protocols, which in turn directly lower bounds standard locality as in \cite{BGT13,BCDH18}.

\begin{definition}[Output locality]
An $n$-party $t$-secure broadcast protocol $\pi$ with setup distribution $\calD_\pi$ has locality $\ell$, if for every PPT adversary $\Adv$ corrupting up to $t$ parties and every sender input $x$ it holds that
\[
\pr{\OutEdges(\pi, \Adv, \calD_\pi, \secParam, x) > \ell} \le \negl(\secParam),
\]
where $\OutEdges(\pi, \Adv, \calD_\pi, \secParam, x)$ is the random variable of the maximum number of parties any honest party sends messages to, defined by running the protocol $\pi$ with the adversary $\Adv$ and setup distribution $\calD_\pi$, security parameter $\secParam$ and sender input $x$. The probability is taken over the random coins of the honest parties, the random coins of $\Adv$, and the sampling coins from the setup distribution $\calD_\pi$.
\end{definition}

\section{Message-Complexity Lower Bound for Static Corruptions}\label{sec:lb:static}
We begin with the proof of \cref{thm:lb:static}.
The high-level idea of the lower bound is that if a protocol has $o(n^2)$ messages, then, with noticeable probability, a randomly chosen pair of parties do not communicate even under certain attacks.

\begin{theorem}[\cref{thm:lb:static}, restated]\label{thm:LB1}
\ThmLBStatic
\end{theorem}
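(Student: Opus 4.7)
The plan is to prove the contrapositive: assume a broadcast protocol $\pi$ tolerates $(1-\epsilon(n))\cdot n$ static corruptions and has total message complexity $m=o(n/\epsilon(n))$, and derive an attack that breaks agreement. First, following the technical overview, I would designate the sender $\sender$, split the remaining $n-1$ parties arbitrarily into equal halves $\calA$ and $\calB$, and have the adversary sample uniformly a subset $\calS\subseteq\calA$ of size $\epsilon(n)\cdot n - 1$ and a party $\Ps\in\calB$. The adversary statically corrupts everyone outside $\calS\cup\sset{\Ps}$---exactly $(1-\epsilon(n))\cdot n$ parties. The corrupted parties then execute a ``split-world'' simulation: the parties in $(\sset{\sender}\cup\calA)\setminus\calS$ play the honest protocol corresponding to sender input $0$ while pretending that all of $\calB$ has crashed, and the parties in $(\sset{\sender}\cup\calB)\setminus\sset{\Ps}$ play the honest protocol corresponding to sender input $1$ while pretending that all of $\calA$ has crashed.

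Next I would argue that, conditioned on no cross-communication between $\calS$ and $\Ps$ during the attack, agreement is violated. Define two benchmark adversarial strategies: $\HYB_\calA$ in which the sender is honest with input $0$ and $\calB$ is crashed from round~$0$, and $\HYB_\calB$ in which the sender is honest with input $1$ and $\calA$ is crashed. By validity, in $\HYB_\calA$ every honest party (including those in $\calS$) outputs~$0$, and in $\HYB_\calB$ every honest party (including $\Ps$) outputs~$1$, both with overwhelming probability. The key observation is that under the non-communication event, the view of $\calS$ in the attack is identically distributed to its view in $\HYB_\calA$, and the view of $\Ps$ is identically distributed to its view in $\HYB_\calB$; hence the two groups output opposing bits, contradicting agreement.

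The main technical step---and what I expect to be the hardest part---is to show that the non-communication event has noticeable probability over the random choice of $\calS,\Ps$ and the protocol's coins. I would decompose the bad event ``$\calS$ and $\Ps$ exchange some message'' into two overlapping sub-events: (i)~some party in $\calS$ sends the \emph{first} cross-message, or (ii)~$\Ps$ sends the first cross-message. Before that first cross-message is sent, the view of $\calS$ in the attack is identical to its view in $\HYB_\calA$---the corrupted parties in $\calA$ execute exactly the same code honest parties would run when $\calB$ has crashed---and symmetrically for $\Ps$ with respect to $\HYB_\calB$. Consequently $\Pr[(i)]$ is upper-bounded by the probability that some party in $\calS$ ever sends any message to $\Ps$ inside the simpler experiment $\HYB_\calA$, and analogously for $(ii)$.

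Finally I would close with an averaging argument. The distribution of $\HYB_\calA$, given the partition $(\calA,\calB)$, is independent of the further uniform choices of $\calS$ and $\Ps$. For any fixed execution of $\HYB_\calA$ with at most $m$ messages in total, the expected number of messages whose sender lies in a uniformly random $\calS\subseteq\calA$ of size $\epsilon(n)\cdot n-1$ and whose receiver equals a uniformly random $\Ps\in\calB$ is at most $m\cdot \frac{\epsilon(n)n-1}{(n-1)/2}\cdot\frac{1}{(n-1)/2}=O(m\epsilon(n)/n)=o(1)$. Markov's inequality then bounds $\Pr[(i)]$, and symmetrically $\Pr[(ii)]$, by $o(1)$, so the non-communication event occurs with probability $1-o(1)$, completing the contradiction. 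The most delicate point in the write-up is formalizing the ``views coincide until the first cross-message'' claim in a way that composes cleanly with the protocol's randomness, the correlated-randomness setup distribution, and any cryptographic primitives used by $\pi$.
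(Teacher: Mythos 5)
Your proposal is correct and follows essentially the same route as the paper: the identical split-world adversary with a random $\calS\subseteq\calA$ and $\Ps\in\calB$, the same decomposition of the cross-communication event into the two overlapping ``who speaks first'' sub-events, the same reduction of each sub-event to a crash-benchmark execution whose distribution is independent of $(\calS,\Ps)$, the same expected-edge-count bound, and the same validity-based violation of agreement conditioned on non-communication. The only differences are cosmetic: you argue the contrapositive with $o(n/\epsilon(n))$ messages and Markov, whereas the paper fixes the explicit threshold $n\cdot\frac{1}{12\epsilon(n)}$ and obtains the constant bound $\Pr[\text{disconnect}]\geq 1/3$ (and, when writing it up, take the same care the paper does that the benchmark views and validity guarantees are compared \emph{conditioned} on the analogous non-communication event in the benchmark, which has noticeable probability).
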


\begin{proof}
Let $\psi(n) = \frac{1}{12\epsilon(n)}$ and let $\pi$ be a broadcast protocol with message complexity $\MC = n \cdot \psi(n)$ that is secure against $(1-\epsilon(n))\cdot n$ static corruptions.
(In fact, we will prove a stronger statement than claimed, where the message complexity of the protocol must be greater than $n \cdot \frac{1}{12\epsilon(n)}$.)
Without loss of generality, we assume that the setup information sampled before the beginning of the protocol $(r_1,\ldots,r_n)\gets\calD_\pi$ includes the random string used by each party. That is, every party $\Party_i$ generates its messages in each round as a function of $r_i$, possibly its input (if $\Party_i$ is the sender), and its incoming messages in prior rounds.
Again, without loss of generality, let $\Party_1$ denote be the sender, and split the remaining parties to two equal-size subsets $\calA$ and $\calB$ (for simplicity, assume that $n$ is odd).

Consider the adversary $\Adv_1$ that proceeds as follows:
\begin{enumerate}
\item Choose randomly a set $\calS \subseteq \calA$ of size $\epsilon(n)\cdot n - 1$ and a party $\Ps \in \calB$.
\item Corrupt all parties except for $\calS \cup \sset{\Ps}$.
\item Receive the setup information of the corrupted parties $\sset{r_i \mid \Party_i \notin \calS \cup \sset{\Ps}}$.
\item Maintain two independent executions, denoted $\exec_0$ and $\exec_1$, as follows.
\begin{itemize}
    \item In the execution $\exec_0$, the adversary runs in its head the parties in $\calA\setminus \calS$ honestly on their setup information $\sset{r_i \mid \Party_i \in\calA\setminus\calS}$ and a copy of the sender, denoted $\Party_1^0$, running on input $0$ and setup information $r_1$.

    The adversary communicates on behalf of the virtual parties in $(\calA\setminus\calS) \cup \sset{\Party_1^0}$ with the honest parties in $\calS$ according to this execution. Every corrupted party in $\calB\setminus\sset{\Ps}$ crashes in this execution, and the adversary drops every message sent by the virtual parties in $(\calA\setminus\calS) \cup \sset{\Party_1^0}$ to $\Ps$ and does not deliver any message from $\Ps$ to these~parties.
    \item In the execution $\exec_1$, the adversary runs in its head the parties in $\calB\setminus \sset{\Ps}$ honestly on their setup information $\sset{r_i \mid \Party_i \in\calB\setminus \sset{\Ps}}$ and a copy of the sender, denoted $\Party_1^1$, running on input $1$ and setup information $r_1$.

    The adversary communicates on behalf of the virtual parties in $(\calB\setminus \sset{\Ps}) \cup \sset{\Party_1^1}$ with the honest $\Ps$ according to this execution. Every corrupted party in $\calA\setminus\calS$ crashes in this execution, and the adversary drops every message sent by the virtual parties in $(\calB\setminus \sset{\Ps}) \cup \sset{\Party_1^1}$ to honest parties in $\calS$ and does not deliver any message from $\calS$ to these parties.
\end{itemize}
\end{enumerate}

We start by defining a few notations.
Consider the following random variables
\[
\inputCoins=\left(R_1,\ldots,R_n,\Srv,\IS\right),
\]
where $R_1,\ldots,R_n$ are distributed according to $\calD_\pi$, and $\Srv$ takes a value uniformly at random in the subsets of $\calA$ of size $\epsilon(n)\cdot n - 1$, and $\IS$ takes a value uniformly at random in $\calB$.
During the proof, $R_i$ represents the setup information (including private randomness) of party $\Party_i$, whereas the pair $(\Srv,\IS)$ corresponds to the random coins of the adversary $\Adv_1$ (used for choosing $\calS$ and $\Ps$).
Unless stated otherwise, all probabilities are taken over these random variables.

Let $\mainAttack$ be the random variable defined by running the protocol $\pi$ with the adversary $\Adv_1$ over $\inputCoins$.
That is, $\mainAttack$ consists of a vector of $n+1$ views: of the honest parties in $\Srv\cup\sset{\Party_\IS}$ and of the corrupted parties in $\calA\setminus \Srv$ and $\calB\setminus\sset{\Party_\IS}$, where the \ith view is denoted by $\viewmain_i$, and of two copies of the sender $\Party_1^0$ and $\Party_1^1$, denoted $\viewmain_{1\mhyphen0}$ and $\viewmain_{1\mhyphen1}$, respectively. Each view consists of the setup information $R_i$, possibly the input, and the set of received messages in each round. Specifically,
\[
\mainAttack=\left(\viewmain_{1\mhyphen0},\viewmain_{1\mhyphen1},\viewmain_2,\ldots,\viewmain_n\right).
\]

Denote by $\attackDisconnect$ the event that $\Party_\IS$ and $\Srv$ do not communicate in $\mainAttack$; that is, $\Party_\IS$ does not send any message to parties in $\Srv$ (according to $\viewmain_\IS$) and every party $\Party_J$ with $J\in \Srv$ does not send any message to $\Party_\IS$ (according to $\viewmain_J$).
We proceed to prove that the event $\attackDisconnect$ occurs with noticeable probability.
\begin{lemma}\label{lem:lbstatic_disconnect}
$\pr{\attackDisconnect} \geq  \frac13$.
\end{lemma}
\begin{proof}
Denote by $\mainStoP$ the event that a party in $\Srv$ sends a message to $\Party_\IS$ in $\mainAttack$, and $\Party_\IS$ did not send any message to any party in $\Srv$ in any prior round.
We begin by upper bounding the probability of $\mainStoP$.
\begin{claim}
$\pr{\mainStoP} \leq  \frac13$.
\end{claim}
\begin{proof}
Consider a different adversary for $\pi$, denoted $\Adv_B$, that statically corrupts all parties in $\calB$ and crashes them (all other parties including the sender are honest).
Let $\thirdAttack$ denote the random variable defined by running the protocol $\pi$ with the adversary $\Adv_B$ over $\inputCoins$, in which the honest sender's input is~$1$.
That is, $\thirdAttack$ consists of a vector of $n/2+1$ views: of the honest parties in $\calA$, where the \ith view is denoted by $\viewcrashB_i$, and the sender $\Party_1$ denoted by $\viewcrashB_1$. Each view consists of the setup information $R_i$, the input $1$ for $\Party_1$, and the set of received messages in each round. Specifically,
\[
\thirdAttack=\left(\viewcrashB_i\right)_{i\in\calA\cup\sset{1}}.
\]

Denote by $\crashBStoP$ the event that a party in $\Srv$ sends a message to $\Party_\IS$ in $\thirdAttack$ such that $\Party_\IS$ did not send any message to any party in $\Srv$ in any prior round.
Note that as long as parties in $\Srv$ do not receive a message from $\Party_\IS$ until some round $\rho$ in $\mainAttack$, their joint view is identically distributed as their joint view in $\thirdAttack$ up until round $\rho$. Therefore,
\begin{equation*}
\pr{\mainStoP}=\pr{\crashBStoP}.
\end{equation*}

Note that, by the definition of $\Adv_B$, the distribution of $\thirdAttack$, and therefore $\pr{\crashBStoP}$, is \emph{independent} of the random variables $\Srv$ and $\IS$. Hence, one can consider the mental experiment where $R_1,\ldots,R_n$ are first sampled for setting $\thirdAttack$, and later, $\Srv$ and $\IS$ are independently sampled at random. This does not affect the event $\crashBStoP$.

Recall that the message complexity of $\pi$ is $\MC=n \cdot \psi(n)$ for $\psi(n)=\frac{1}{12\epsilon(n)}$. Further, $\Srv$ is of size $\ssize{\Srv}=\epsilon(n)\cdot n - 1$ and $\ssize{\calA}=\ssize{\calB}=n/2$. Observe that the message complexity upper-bounds the number of communication edges between $\calA$ and $\calB$. Further, the probability that a party in $\Srv$ talks first to $\Party_\Is$ is upper-bounded by the probability that there exists a communication edge between $\Srv$ and $\Party_\Is$. Since $\Srv$ and $\IS$ are uniformly distributed in $\calA$ and $\calB$, respectively, we obtain that this probability is bounded by
\begin{align*}
\pr{\crashBStoP}
&\leq \MC\cdot\frac{1}{\ssize{\calB}}\cdot \frac{\ssize{\Srv}}{\ssize{\calA}} \\
&= n \cdot \psi(n) \cdot \frac{1}{n/2} \cdot \frac{\epsilon(n)\cdot n-1}{n/2} \\
&\leq n \cdot \psi(n) \cdot \frac{1}{n/2} \cdot \frac{\epsilon(n)\cdot n}{n/2} \\
&= 4 \cdot \psi(n) \cdot \epsilon(n)\\
&=\frac{4 \cdot \epsilon(n)}{12 \cdot \epsilon(n)} =\frac13.
\qedhere
\end{align*}
\end{proof}

Similarly, denote by $\mainPtoS$ the event that $\Party_\IS$ sends a message to a party in $\Srv$ in $\mainAttack$, such that no party in $\Srv$ sent a message to $\Party_\IS$ in any prior round; i.e., changing the order from $\mainStoP$. We upper bound the probability of $\mainPtoS$ in an analogous manner.
\begin{claim}
$\pr{\mainPtoS} \leq  \frac13$.
\end{claim}
\begin{proof}
Consider a different adversary for $\pi$, denoted $\Adv_A$, that statically corrupts all parties in~$\calA$ and crashes them.
Let $\secondAttack$ be a random variable defined by running the protocol $\pi$ with the adversary $\Adv_A$ over $\inputCoins$, in which the honest sender's input is $0$.
That is, $\secondAttack$ consists of a vector of $n/2+1$ views: of the honest parties in $\calB$, where the \ith view is denoted by $\viewcrashA_i$, and the sender $\Party_1$ denoted by $\viewcrashA_1$. Each view consists of the setup information $R_i$, the input $0$ for $\Party_1$, and the set of received messages in each round. Specifically,
\[
\secondAttack=\left(\viewcrashA_i\right)_{i\in\calB\cup\sset{1}}.
\]

Denote by $\crashAPtoS$ the event that $\Party_\IS$ sends a message to a party in $\Srv$ in $\secondAttack$, and no party in $\Srv$ sent a message to $\Party_\IS$ in any prior round.
As long as $\Party_\IS$ does not receive a message from parties in $\Srv$  until some round $\rho$ in $\mainAttack$, its view is identically distributed as its view in $\secondAttack$ up until round $\rho$. Therefore,
\begin{equation*}
\pr{\mainPtoS}=\pr{\crashAPtoS}.
\end{equation*}
An analogue analysis to the previous case shows that $\pr{\crashAPtoS} \leq 1/3$, as desired.
\end{proof}

Combined together, we get that
\[
\pr{\neg \attackDisconnect} = \pr{\mainStoP \cup \mainPtoS} \leq \pr{\mainStoP} + \pr{\mainPtoS} \leq\frac23.
\]
Therefore, $\pr{\attackDisconnect} \geq  1/3$.
This concludes the proof of \cref{lem:lbstatic_disconnect}.
\end{proof}

We proceed to show that conditioned on $\attackDisconnect$, \emph{agreement} of the protocol $\pi$ is broken.
Denote by $\outmain_i$ the random variable denoting the output of $\Party_i$ according to $\mainAttack$.
Further, denote by $\JS$ the random variable corresponding to the minimal value in $\Srv$.

\begin{lemma}\label{lem:lbstatic_agreement}
$\pr{\outmain_\IS \neq \outmain_\JS \mid \attackDisconnect} \geq  1-\negl(\secParam)$.
\end{lemma}
\begin{proof}
We begin by showing that conditioned on $\attackDisconnect$, party $\Party_\Is$ outputs $0$ with overwhelming probability.
\begin{claim}\label{claim:crashA_zero}
$\pr{\outmain_\IS =0 \mid \attackDisconnect} \geq  1-\negl(\secParam)$.
\end{claim}
\begin{proof}
Consider again the adversary $\Adv_A$ that statically corrupts all parties in $\calA$ and crashes them, with the corresponding random variable $\secondAttack$.
Denote by $\crashADisconnect$ the event that $\Party_\IS$ does not send any message to parties in $\Srv$ (according to $\viewcrashA_\IS$). It holds that 
\[
\pr{\crashADisconnect} = \pr{\neg \crashAPtoS} = 1- \pr{\crashAPtoS} \geq 2/3.
\]

First, since the sender is honest and has input $0$, by \emph{validity} all honest parties in $\calB$ output $0$ in such execution, except for negligible probability. This holds even conditioned on $\crashADisconnect$ (since $\crashADisconnect$ occurs with noticeable probability). Denote by $\outcrashA_i$ the random variable denoting the output of $\Party_i$ according to $\secondAttack$. Then,
\begin{equation}\label{eq:crashA_one}
\pr{\outcrashA_\IS= 0 ~\Big|~ \crashADisconnect} \geq 1-\negl(\secParam).
\end{equation}

Second, note that conditioned on $\crashADisconnect$ (by an analogous analysis of \cref{lem:lbstatic_disconnect}, this probability is non-zero), the view of $\Party_\IS$ is identically distributed in $\secondAttack$ as its view in $\mainAttack$ conditioned on $\attackDisconnect$. Indeed, conditioned on $\attackDisconnect$, party $\Party_\IS$ receives messages only from corrupt parties in $\mainAttack$, which are consistently simulating precisely this execution where $\calA$ has crashed and the sender has input 0.  Therefore,
\begin{equation}\label{eq:crashA_two}
\pr{\outcrashA_\IS= 0 ~\Big|~ \crashADisconnect} = \pr{\outmain_\IS= 0 ~\Big|~ \attackDisconnect}.
\end{equation}

The proof follows from Equations~\ref{eq:crashA_one} and \ref{eq:crashA_two}.
This concludes the proof of \cref{claim:crashA_zero}.
\end{proof}

We proceed to show that, conditioned on $\attackDisconnect$, parties in $\Srv$ output $1$ with overwhelming probability under the attack of $\Adv_1$. Recall that $\JS$ denotes the random variable corresponding to the minimal value in $\Srv$.
\begin{claim}\label{claim:crashB_one}
$\pr{\outmain_\JS =1 \mid \attackDisconnect} \geq  1-\negl(\secParam)$.
\end{claim}
\begin{proof}
The proof follows in nearly an identical manner. Namely, consider the adversary $\Adv_B$ that statically corrupts all parties in $\calB$ and crashes them, and the random variable $\thirdAttack$.
Denote by $\crashBDisconnect$ the event that for every $J\in \Srv$, party $\Party_J$ does not send any message to $\Party_\IS$ (according to $\viewcrashB_J$).
It holds that
\[
\pr{\crashBDisconnect} = \pr{\neg \crashBStoP} = 1- \pr{\crashBStoP} \geq 2/3.
\]

Since the sender is honest and has input $1$, by \emph{validity} all honest parties in $\calA$ output $1$ except for negligible probability. This holds even conditioned on $\crashBDisconnect$ (since $\crashADisconnect$ occurs with noticeable probability). Denote by $\outcrashB_i$ the random variable denoting the output of $\Party_i$ according to $\thirdAttack$, and recall that $\JS$ corresponds to the minimal value in $\Srv$. Then,
\begin{equation}\label{eq:crashB_one}
\pr{\outcrashB_\JS= 1 ~\Big|~ \crashBDisconnect} \geq 1-\negl(\secParam).
\end{equation}

Conditioned on $\crashBDisconnect$, the view of $\Party_\JS$ is identically distributed in $\thirdAttack$ as its view in $\mainAttack$ conditioned in $\attackDisconnect$. Therefore,
\begin{equation}\label{eq:crashB_two}
\pr{\outcrashB_\JS= 1 ~\Big|~ \crashBDisconnect} = \pr{\outmain_\JS= 1 ~\Big|~ \attackDisconnect}.
\end{equation}

The proof follows from Equations~\ref{eq:crashB_one} and \ref{eq:crashB_two}.
This concludes the proof of \cref{claim:crashB_one}.
\end{proof}
Since $\Party_\Is$ and $\Party_\Js$ are honest, the proof of \cref{lem:lbstatic_agreement} follows from \cref{claim:crashA_zero} and \cref{claim:crashB_one}.
\end{proof}

Collectively, we have demonstrated an adversarial strategy $\Adv_1$ that violates the agreement property of protocol $\pi$ with  noticeable probability:
\begin{align*}
\pr{\outmain_\IS \neq \outmain_\JS}
& = \pr{\outmain_\IS \neq \outmain_\JS \mid \attackDisconnect}\cdot \pr{\attackDisconnect}\\
&  + \pr{\outmain_\IS \neq \outmain_\JS  \mid \neg \attackDisconnect}\cdot \pr{\neg \attackDisconnect}\\
& \geq \pr{\outmain_\IS \neq \outmain_\JS  \mid \attackDisconnect}\cdot \pr{\attackDisconnect}\\
& \geq (1-\negl(\secParam))\cdot \frac13.
\end{align*}
Note that the attack succeeds for any choice of distribution for setup information, and that the adversarial strategy runs in polynomial time, thus applying even in the presence of computational hardness assumptions.  This concludes the proof of \cref{thm:LB1}.
\end{proof}

\section{Locality Lower Bound for Adaptive Corruptions}\label{sec:lb:adaptive}

We proceed with the proof of \cref{thm:lb:adaptive}.
Here we show how a weakly adaptive adversary that can corrupt $n/2+k$ parties can target any party of its choice and force a that party to communicate with $k$ neighbors. We refer to \cref{sec:intro:tech} for a high-level overview of the attack.

\begin{theorem}[\cref{thm:lb:adaptive}, restated]\label{thm:LB2}
\ThmLBAdaptive
\end{theorem}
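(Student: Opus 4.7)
The plan is to construct a PPT adversary $\calA$ that, targeting any non-sender $\Ps$, violates agreement of $\pi$ whenever $\Ps$'s locality stays at most $\locality$, thereby forcing it above $\locality$. Fix $\Ps$. $\calA$ samples $b \leftarrow \{0,1\}$ uniformly and partitions the remaining $n-2$ parties into two equal halves $\calS_0,\calS_1$. It statically corrupts the sender and every party in $\calS_{1-b}$, using $1 + (n-2)/2 = n/2$ of the $t = n/2+\locality$ corruption budget. Define the hypothetical execution $\bar E_{b'}$ to have an honest sender on input $b'$ with both $\calS_{1-b'}$ and $\Ps$ crashing from the start. Throughout the execution, $\calA$ plays out $\bar E_0$ toward $\calS_0$ and $\bar E_1$ toward $\calS_1$: the corrupt sender delivers the honest messages for input $b'$ to each party in $\calS_{b'}$ and sends nothing to $\Ps$; corrupt $\calS_{1-b}$ stays silent toward $\calS_b$ (matching its crash in $\bar E_b$) while toward $\Ps$ it performs the honest role of $\calS_{1-b}$ in $\bar E_{1-b}$. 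Whenever a previously honest party $P_j$ receives a message from $\Ps$, $\calA$ adaptively corrupts $P_j$ and has $P_j$'s subsequent actions mirror the honest continuation of $\bar E_b$, as if the inbound message from $\Ps$ had never arrived.

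Assume for contradiction that $\Ps$'s locality in the attack is at most $\locality$ with non-negligible probability $\delta$. Conditioned on this event, the adaptive corruptions number at most $\locality$, so $\calA$ respects its budget. Let $H_b \subseteq \calS_b$ be the set of parties in $\calS_b$ that remain honest at the end; since $|\calS_b| > \locality$, we have $|H_b| \geq 1$. The proof proceeds through two claims whose conjunction contradicts agreement: (A) every party in $H_b$ outputs $b$ with overwhelming probability, and (B) the distribution of $\Ps$'s view---and hence its output---is identical for $b=0$ and $b=1$, so $\Ps$ outputs $b$ with probability exactly $1/2$. Together, (A) and (B) imply agreement fails with probability at least $\delta\cdot(1/2 - \negl(\secParam))$, contradicting the assumed security of $\pi$.

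Claim (B) reduces to a direct symmetry argument: $\calA$'s prescription for messages delivered to $\Ps$ is syntactically invariant under swapping $(\calS_0 \leftrightarrow \calS_1,\ b \leftrightarrow 1-b)$, and since the partition is uniform and the setup randomness is sampled independently of $b$, $\Ps$'s transcript is identically distributed in the two cases. Claim (A) is the main technical step: I will argue that the joint view of $H_b$ in the attack is identically distributed to the joint view of $H_b$ in the hypothetical $\bar E_b$, whereupon validity applied to $\bar E_b$ (whose sender is honest with input $b$ and whose total number of crashed/corrupt parties $n/2 + 1$ is within the tolerance $t$) forces output $b$. The choice to crash $\Ps$ in $\bar E_b$ is essential here: in $\bar E_b$ no honest party ever receives a message from $\Ps$, and $\calA$ exactly reproduces this by muting corrupt $\calS_{1-b}$ toward $\calS_b$, emitting the honest input-$b$ messages from the corrupt sender using its setup randomness, and having each adaptively-corrupted party in $\calS_b\setminus H_b$ continue its $\bar E_b$-trajectory using its setup randomness while discarding the inbound message from $\Ps$. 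The main obstacle is carrying out this last step cleanly: one must verify that the adversary's post-corruption simulation yields outgoing messages identically distributed to the honest continuation in $\bar E_b$, despite the corrupted party's actual internal state differing from its $\bar E_b$ counterpart, so that the aggregate view of $H_b$ cannot distinguish the attack from $\bar E_b$.
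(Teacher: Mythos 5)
Your proof is correct and follows essentially the same strategy as the paper's: split the non-sender, non-$\Ps$ parties into $\calS_0,\calS_1$, corrupt the sender and $\calS_{1-b}$ for a random $b$, play two crashed-half simulations toward the two sides, adaptively corrupt anyone $\Ps$ talks to, use validity on a crashed-half hypothetical to pin down the surviving honest parties' output, and use a symmetry argument on $\Ps$'s view. The one genuine (and, in my opinion, slightly cleaner) deviation is your choice of benchmark hypothetical $\bar E_\beta$, in which \emph{both} $\calS_{1-\beta}$ and $\Ps$ are crashed from the start. The paper's benchmark adversary $\Adv_\beta$ keeps $\Ps$ honest and instead adaptively corrupts each $\calS_\beta$ party that receives a message from $\Ps$, forcing it to ignore the message. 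Your version avoids tracking adaptive corruptions inside the hypothetical and directly applies validity; it works just as well because the surviving honest parties $H_\beta$ never receive anything from $\Ps$ in either world, so they cannot tell the two apart. Correspondingly, you also let the simulated sender send nothing to $\Ps$ (consistent with $\Ps$ being crashed in $\bar E_b$), which sidesteps an awkward point in the paper's simulation where the single corrupt sender would have to feed $\Ps$ messages from two virtual senders. Finally, the paper first derives $\Ps$'s output under $B=\beta$ via agreement and only then contradicts the symmetry claim, while you use symmetry to cap $\Pr[\text{output of }\Ps = b]$ at $1/2$ and then contradict agreement directly; these are logically equivalent.

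Two small corrections. First, the crashed/corrupt count in $\bar E_\beta$ is $\ssize{\calS_{1-\beta}} + 1 = (n-2)/2 + 1 = n/2$, not $n/2 + 1$; both are $\le t$, so the slip is harmless. Second, ``exactly $1/2$'' should be ``at most $1/2$'' unless you additionally assume $\Ps$ always outputs a bit (which is standard for broadcast and is what makes the equality hold), though either bound suffices. One implicit step worth being explicit about in a full write-up: both validity in $\bar E_\beta$ and agreement in the attack hold unconditionally over the setup randomness, but your claims (A) and the final contradiction are stated conditionally on the low-locality event; since that event has non-negligible probability $\delta$, the conditioned failure probabilities are bounded by $\negl(\secParam)/\delta = \negl(\secParam)$, so the conditioning is benign.
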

\begin{proof}
Let $\pi$ be a broadcast protocol that is secure against $t=n/2+k$ adaptive corruptions.
Without loss of generality, we assume that the setup information sampled before the beginning of the protocol $(r_1,\ldots,r_n)\gets\calD_\pi$ includes the random string used by each party. That is, every party $\Party_i$ generates its messages in each round as a function of $r_i$, possibly its input (if $\Party_i$ is the sender), and its incoming messages in prior rounds.
Again, without loss of generality, let $\Party_1$ denote be the sender. Further, fix the party $\Ps$, and split the remaining parties (without $\Party_1$ and $\Ps$) to two equal-size subsets $\calS_0$ and $\calS_1$ (for simplicity, assume that $n$ is even).

Consider the following adversary $\Adv$ that proceeds as follows:
\begin{enumerate}
\item Wait for the setup phase to complete. Later on, whenever corrupting a party $\Party_i$, the adversary receive its setup information $r_i$.
\item Corrupt the sender $\Party_1$.
\item Toss a random bit $b\gets\zo$ and corrupt all parties in $\calS_{1-b}$.
\item Maintain two independent executions, denoted $\exec_0$ and $\exec_1$, as follows.
\begin{itemize}
    \item In the execution $\exec_b$, the adversary runs in its head a copy of the sender, denoted $\Party_1^b$, honestly running on input $b$ and setup $r_1$. The adversary communicates on behalf of the virtual party $\Party_1^b$, and eventually corrupted parties in $\calS_b$, with all honest parties $\calS_b\cup\sset{\Ps}$ according to this execution. The virtual parties in $\calS_{1-b}$ are emulated as crashed in this execution.

        Whenever $\Ps$ sends a message to a party $\Party_i\in\calS_b$ this party gets corrupted and ignores this message (\ie the adversary does not deliver messages from $\Ps$ to $\Party_i$).
    \item In the execution $\exec_{1-b}$, the adversary runs in its head the parties in $\calS_{1-b}$ honestly on their setup information $\sset{r_i \mid \Party_i \in\calS_{1-b}}$ and a copy of the sender, denoted $\Party_1^{1-b}$, running on input $1-b$ and setup $r_1$. The adversary communicates on behalf of the virtual parties in $(\calS_{1-b}) \cup \sset{\Party_1^{1-b}}$ with $\Ps$ according to this execution. The honest parties in $\calS_b$ are emulated as crashed in this execution; that is, the adversary drops every message sent by the virtual parties in $(\calS_{1-b}) \cup \sset{\Party_1^{1-b}}$ to $\calS_b$ and does not deliver any message from $\calS_b$ to these parties.

        Whenever $\Ps$ sends a message to a party $\Party_i\in\calS_{1-b}$ this party ignores this message (\ie the adversary does not deliver the message to $\Party_i$).
\end{itemize}
\end{enumerate}

We start by defining a few notations. Consider the following random variables
\[
\inputCoinsAdaptive=\left(R_1,\ldots,R_n,B\right),
\]
where $R_1,\ldots,R_n$ are distributed according to $\calD_\pi$, and $B$ takes a value uniformly at random in $\zo$.
During the proof, $R_i$ represents the setup information (including private randomness) of party $\Party_i$, whereas $B$ corresponds to the adversarial choice of which set to corrupt.
Unless stated otherwise, all probabilities are taken over these random variables.

Let $\mainAttack$ be the random variable defined by running the protocol $\pi$ with the adversary $\Adv$ over $\inputCoinsAdaptive$.
That is, $\mainAttack$ consists of a vector of $n+1$ views: of the parties in $\Srv_b\cup\sset{\Party_\IS}$, of the corrupted parties in $\calS_{1-b}$, where the \ith view is denoted by $\viewmain_i$, and of two copies of the sender $\Party_1^0$ and $\Party_1^1$, denoted $\viewmain_{1\mhyphen0}$ and $\viewmain_{1\mhyphen1}$, respectively. Each view consists of the setup information $R_i$, possibly the input (for the sender), and the set of received messages in each round. Specifically,
\[
\mainAttack=\left(\viewmain_{1\mhyphen0},\viewmain_{1\mhyphen1},\viewmain_2,\ldots,\viewmain_n\right).
\]

Denote by $\attackLowLocal$ the event that the output-locality of $\Party_\IS$ is at most $k$ in $\mainAttack$; that is, $\Party_\IS$ sends messages to at most $k$ parties (according to $\viewmain_\IS$).
If $\Pr[\attackLowLocal]=\negl(\secParam)$, then the proof is completed. Otherwise, it holds that $\Pr[\attackLowLocal]$ is non-negligible (in particular, $\Pr[\attackLowLocal]>0$). We will show that conditioned on $\attackLowLocal$, \emph{agreement} is broken.
Denote by $\outmain_i$ the random variable denoting the output of $\Party_i$ according to $\mainAttack$.

First, note that conditioned on $\attackLowLocal$, the view of $\Ps$ is identically distributed no matter which set $\calS_b$ is corrupted.

\begin{claim}\label{claim:adaptive_zero}
For every $\beta\in\zo$ it holds that
\[
\pr{\outmain_\is =\beta\mid \attackLowLocal \cap (B=0)} =\pr{\outmain_\is =\beta\mid \attackLowLocal \cap (B=1)}.
\]
\end{claim}
\begin{proof}
By the construction of $\Adv$, for each $\beta\in\zo$ party $\Ps$ receives from the parties in $\calS_\beta$ and from $\Party_1$ messages that correspond to an execution by honest parties on sender input $\beta$ as if the parties in $\calS_{1-\beta}$ all crashed, and where every party in $\calS_\beta$ that $\Ps$ talks to ignores its message (since $\Ps$ talks to at most $k$ parties conditioned on $\attackLowLocal$, the adversary can corrupt all of them).

Further, $\Ps$ receives from the parties in $\calS_{1-\beta}$ and from $\Party_1$ messages that correspond to a simulated execution by honest parties on sender input $1-\beta$ as if the parties in $\calS_{\beta}$ all crashed, and where every party in $\calS_{1-\beta}$ that $\Ps$ talks to ignores its message.

Clearly, the view of $\Ps$ is identically distributed in both cases; hence, its output bit is identically distributed as well.
\end{proof}

We proceed to show that conditioned on $\attackLowLocal$, party $\Ps$ outputs $0$ for $B=0$ and outputs $1$ for $B=1$.

\begin{claim}\label{claim:adaptive_one}
For every $\beta\in\zo$ it holds that
\[
\pr{\outmain_\is =\beta\mid \attackLowLocal \cap (B=\beta)} =1-\negl(\secParam).
\]
\end{claim}
\begin{proof}
Consider a different adversary for $\pi$, denoted $\Adv_\beta$, that proceeds as follows:
\begin{enumerate}
\item Wait for the setup phase to complete. 
\item Corrupt all parties in $\calS_{1-\beta}$ and crash them.
\item Whenever $\Ps$ sends a message to a party $\Party_i\in\calS_\beta$ this party gets corrupted and ignores this message (\ie the adversary does not deliver messages from $\Ps$ to $\Party_i$).
\end{enumerate}

Let $\AttackBeta$ be the random variable defined by running the protocol $\pi$ with the adversary $\Adv_\beta$ over $\inputCoins$, in which the honest sender's input is $\beta$.
That is, $\AttackBeta$ consists of a vector of $n/2$ views: of the parties in $\calS_\beta\cup\sset{\Ps}$ (both honest and corrupted), where the \ith view is denoted by $\viewcrashS_i$, and the sender $\Party_1$ denoted by $\viewcrashS_1$. Each view consists of the setup information $R_i$, the input $\beta$ for $\Party_1$, and the set of received messages in each round. Specifically,
\[
\AttackBeta=\left(\viewcrashS_i\right)_{i\in\calS_\beta\cup\sset{1,\is}}.
\]

Let us denote by $\attackLowLocalCrash$ the event that the output-locality of $\Party_\IS$ is at most $k$ in $\AttackBeta$; that is, $\Party_\IS$ sends messages to at most $k$ parties (according to $\viewcrashS_\is$).
If $\Pr[\attackLowLocalCrash]=\negl(\secParam)$, then $\Adv_\beta$ can force the locality of $\Party_\IS$ to be high in $\AttackBeta$, and the proof is completed. Otherwise, it holds that $\Pr[\attackLowLocalCrash]$ is non-negligible.

Note that since $\ssize{\calS_\beta}=(n-1)/2$ and $\locality < (n-1)/2$, then conditioned on $\attackLowLocalCrash$ there exists at least one remaining honest party in $\calS_\beta$ at the end of the execution with $\Adv_\beta$. By \emph{validity}, each such honest party must output $\beta$ with overwhelming probability. Denote by $\outcrashS_i$ the random variable denoting the output of $\Party_i$ according to $\AttackBeta$. Denote by $\JS$ the random variable corresponding to the minimal index of an honest party in $\calS_\beta$ at the end of the execution with $\Adv_\beta$. Then
\begin{equation}\label{eq:adaptive:one}
\pr{\outcrashS_\JS =\beta\mid \attackLowLocalCrash} =1-\negl(\secParam).
\end{equation}
Further, note that the set of all honest parties in $\calS_\beta$ and their joint view in an execution with $\Adv_\beta$ conditioned on $\attackLowLocalCrash$ is identically distributed as in an execution with $\Adv$ conditioned on $\attackLowLocal\cap (B=\beta)$. Therefore,
\begin{equation}\label{eq:adaptive:two}
\pr{\outcrashS_\JS =\beta\mid \attackLowLocalCrash} =\pr{\outmain_\JS =\beta\mid \attackLowLocal\cap (B=\beta)}.
\end{equation}
Finally, by \emph{agreement}, since both $\Party_\JS$ and $\Ps$ are honest at the end of the execution with $\Adv$, conditioned on $\attackLowLocal\cap (B=\beta)$, it holds that
\begin{equation}\label{eq:adaptive:three}
\pr{\outmain_\is =\beta\mid \attackLowLocal\cap (B=\beta)} =\pr{\outmain_\JS =\beta\mid \attackLowLocal\cap (B=\beta)}-\negl(\secParam).
\end{equation}
The claim follows from Equations \ref{eq:adaptive:one}, \ref{eq:adaptive:two}, and \ref{eq:adaptive:three}.
\end{proof}

By \cref{claim:adaptive_zero} and \cref{claim:adaptive_one} it follows that $\Pr[\attackLowLocal]=\negl(\secParam)$.
This concludes the proof of \cref{thm:LB2}.
\end{proof}

\newcommand{\type}{\mathsf{T}}
\newcommand{\neighborhood}{\mathcal{N}}
\newcommand{\committee}{\mathcal{C}}
\newcommand{\relayed}{\mathsf{Relayed}}

\section{Statically Secure Sub-Quadratic Broadcast}\label{sec:ub}

In this section we prove \cref{thm:ub:static} by presenting a broadcast protocol secure against a constant fraction of static corruptions that requires $\tilde{O}(n)$ bits of total communication, given a trusted-PKI setup for VRFs. The protocol is balanced, and each party communicates $\polylog(n)\cdot\poly(\secParam)$ bits.

\begin{proposition}[\cref{thm:ub:static}, restated]\label{thm:UB}
\ThmUBStatic
\end{proposition}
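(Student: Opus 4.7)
The plan is to obtain the proposition by combining the committee-based Dolev--Strong framework of Chan et al.~\cite{CPS20} with the gossip-based message dissemination of Tsimos et al.~\cite{TLP22}. The CPS20 protocol uses VRF self-election to dynamically pick a polylog-size committee in each round of a Dolev--Strong-style iteration; under static corruption of any $(1-\epsilon)n$ parties, and given a trusted PKI for the VRFs, every such committee contains at least one honest member with overwhelming probability, and the total number of committee slots across the whole execution is $\polylog(n)$. The $\tilde O(n^2)$ communication cost in \cite{CPS20} arises only because each committee member multicasts its signed certificate to all $n$ parties; my goal is to replace this subroutine with an $\tilde O(n)$-communication dissemination subprotocol that still delivers every committee-initiated message to every honest party.

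First, I would replace each committee-to-all multicast with an instance of the $\Flood$-style gossip subprotocol used in \cite{TLP22}: the sender forwards its signed certificate to a random set of $\polylog(n)$ neighbors, who relay to their own random neighbors for $O(\log n)$ iterations. A standard analysis, essentially lifted from \cite{TLP22,KSSV00}, shows that a single invocation delivers the message to every honest party except with negligible probability, costs $\tilde O(n)$ bits in total, and is balanced at $\tilde O(1)$ per party in expectation. The underlying random graph remains well-connected among the honest parties even against $(1-\epsilon)n$ static corruptions for any constant $\epsilon>0$, because a constant fraction of honest nodes sampled at random in each layer is enough to sustain expansion.

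Second, I would bound the number of distinct messages each party ever relays, so as to keep overall communication at $\tilde O(n)$. Each honest party forwards only messages that parse as valid CPS20 certificates for the current round, i.e., chains of signatures issued by parties carrying a valid VRF proof of committee membership. Since only $\polylog(n)$ parties ever self-elect across the whole execution, and since duplicate or malformed certificates are discarded, each party relays at most $\polylog(n)$ distinct messages. Summing over the $\polylog(n)$ rounds of the outer protocol yields total communication $\tilde O(n)$ and per-party load $\tilde O(1)$, while validity, agreement, and termination transfer directly from \cite{CPS20} because the gossip layer faithfully simulates committee-to-all multicast.

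The main obstacle is the threat that corrupted parties who legitimately hold committee tokens try to inject many different well-formed certificates to blow up relay traffic. This is handled exactly as in \cite{TLP22}: a per-sender rate limit is enforced inside the gossip layer, together with the VRF-based bound on the number of valid tokens per round; any honest party that sees more than the expected number of certificates attributed to the same nominal source simply relays the first few and discards the rest. Carrying this bound through the CPS20 analysis yields the claimed $\tilde O(n)$ total and $\tilde O(1)$ per-party communication against $(1-\epsilon)n$ static corruptions, completing the proof.
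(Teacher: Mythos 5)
Your high-level route is the same as the paper's: take the committee-based Dolev--Strong protocol of \cite{CPS20} and replace every ``send to all'' step by a TLP22-style flooding/gossip subprotocol over sparse random neighborhoods, arguing that the honest subgraph stays connected under $(1-\epsilon)n$ static corruptions. The place where your argument has a genuine hole is exactly the step on which the $\tilde O(n)$ bound hinges: bounding how many messages each party relays. Your claim that ``each party relays at most $\polylog(n)$ distinct messages'' because only $\polylog(n)$ parties self-elect is not correct as stated: a certificate is a \emph{set} of signatures, so once a few corrupted committee members sign (possibly both bits), the number of distinct well-formed certificates is not bounded by the committee size, and any of the $(1-\epsilon)n$ corrupted parties can originate gossip of any valid certificate. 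Your patch --- a ``per nominal source'' rate limit --- does not close this, since the number of nominal sources is $\Theta(n)$; an honest party obeying such a limit can still be induced to relay $\Theta(n)$ distinct valid certificates, giving $\tilde\Theta(n^2)$ total communication. The paper avoids this by making the relay rule type-based rather than source-based: a message of ``type $\mathsf{T}_{b,r}$'' is any message carrying bit $b$ with at least $r$ valid signatures (sender plus $r-1$ distinct committee members), each flooding instance is run for a fixed type, and each party relays \emph{at most one} message per instance (a relay flag), with only $O(1)$ instances per stage and $O(\secParam)$ stages. This is sound because any two type-$\mathsf{T}_{b,r}$ messages are interchangeable for the Dolev--Strong extraction logic.

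This also exposes a second, related inaccuracy in your write-up: once honest parties relay only one representative per type, the gossip layer does \emph{not} ``faithfully simulate committee-to-all multicast,'' so validity and agreement do not transfer from \cite{CPS20} as a black box. The paper instead re-proves them under the weaker guarantee ``if some honest party holds a type-$\mathsf{T}_{b,r}$ message, every honest party obtains \emph{some} type-$\mathsf{T}_{b,r}$ message,'' via a short adapted Dolev--Strong case analysis (first honest party to extract $b$ does so in a stage $r<R+1$ because the single statically elected $O(\secParam)$-size committee contains an honest member, and then a type-$\mathsf{T}_{b,r+1}$ message propagates so everyone extracts $b$ by stage $r+1$). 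Your proposal needs both of these repairs --- the type-based relay rule and the correctness argument relative to the weaker propagation guarantee --- to actually yield $\tilde O(n)$ total and $\tilde O(1)$ per-party communication.
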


Our protocol is a simple variant of Chan, Pass, and Shi \cite{CPS20}, where every step that requires all-to-all communication is substituted by a more communication-efficient message-propagation mechanism. We first describe the message-propagation mechanism, which follows the spirit of Tsimos, Loss and Papamanthou \cite{TLP22}, and afterwards the modified broadcast protocol of \cite{CPS20}.

\paragraph{A message-propagation mechanism.}\label{sec:flood}

Consider the problem where each party has (possibly) an input message it would like to disseminate to all the other parties. This can trivially be solved by letting each party send its input message to all other parties, which would incur a communication complexity of $n^2 \cdot \ell$, where $\ell$ is the length of the input message.

In some cases, however, we are interested in disseminating only messages of a certain type. More precisely, we want that if any honest party has a message of type $\type$, then all honest parties obtain at least one message of type $\type$, but we do not need that all parties obtain all input messages of type $\type$.

\begin{definition}[Message-propagation protocol]\label{def:flooding}
Let $\type$ be a predicate. An $n$-party protocol $\pi$, where each party $\Party_i$ has an initial input $x_i$ (or no input), is a \emph{message-propagation} protocol for type-$\type$ messages secure against $t$ static corruptions, if for any PPT adversary that statically corrupts up to $t$ parties, the following holds except for negligible probability in $\secParam$: If any honest party holds an input of type-$\type$, then all honest parties output a value of type-$\type$.
\end{definition}

The trivial approach described above (where every party sends its input message to every other party) still requires quadratic communication, since a linear number of parties may distribute a type-$\type$ message towards all parties. A more efficient solution employs instead a flooding mechanism over a communication graph that forms an expander (i.e., a sparse graph with strong connectivity properties). More concretely, it is possible to form a communication graph where each party is connected only to $O(\log(n) + \secParam)$ other parties, and the honest parties form a connected component except with negligible probability in $\secParam$.

Each party $\Party_i$ can then send its input to its neighbors (if the input is of type $\type$), and if $\Party_i$ has no input or an input that is not of type $\type$, party $\Party_i$ can simply forward to its neighbors the first type-$\type$ message that it received. It is easy to see that, since the honest parties form a connected component, if any honest party has a message of type $\type$, then all honest parties obtain at least one message of type $\type$. Moreover, the communication complexity is $O(n\cdot (\log(n) + \secParam) \cdot \ell)$, where $\ell$ is the length of a type-$\type$ message. We describe the protocol in \cref{prot:flood1} and obtain the following lemma.

\pprotocol{$\flood(\type, n, \epsilon, \secParam)$}{Message-propagation mechanism for $n$ parties for messages of type $\type$.}{prot:flood1}{thb}{

\textbf{Parameters:} 
$\type$ is the type of messages to propagate, $n$ is the number of parties, $\epsilon$ is the fraction of honest parties, and $\secParam$ is the security parameter.
\smallskip

\textbf{Variables:}  
$\Party_i$ sets local variables $\neighborhood_i = \emptyset$ and $\relayed_i = 0$.
\smallskip

\textbf{Message propagation:}

Each party $\Party_i$ has an input $x_i$ (no input is interpreted as $\bot$).

Each party $\Party_i$ locally adds $\Party_j$ to the set $\neighborhood_i$ with uniform probability $p_\mathsf{flood}\assign\frac{\log(n)+\secParam}{\epsilon n}$.\footnote{We note that with static corruptions, each party $\Party_i$ can maintain the same neighborhood set $\neighborhood_i$ across several instances of $\Flood$.}

Let $\rho=\floodrounds$. Each party $\Party_i$ does the following.

Round $1$: If the input message $x_i$ is of type $\type$, send $x_i$ to each $\Party_j\in \neighborhood_i$, set $\relayed_i = 1$, and set $y_i = x_i$.

\For{each round $r\in\sset{2,\ldots,\rho}$}{
    Let $\calS$ be the messages received in round $r-1$. If there is a message $m \in \calS$ of type-$\type$ and $\relayed_i = 0$, then send $m$ to each $\Party_j\in \neighborhood_i$, set $\relayed_i = 1$, and set $y_i = m$.
}

Output $y_i$.
}

\begin{lemma}\label{lem:flood}
Let $\secParam$ be a security parameter, let $n$ be the number of parties, and let $0<\epsilon<1$ be a constant. Protocol $\flood(\type,n,\epsilon,\secParam)$ is a message-propagation protocol for type-$\type$ messages, secure against static $(1-\epsilon)\cdot n$ corruptions. The communication complexity is $O(n\cdot (\log(n) + \secParam) \cdot \ell)$ bits, where $\ell$ is the size of a type-$\type$ message.
\end{lemma}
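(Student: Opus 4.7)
}

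The plan is to establish the two claims separately: (i) a communication bound based on the maximum out-neighborhood size, and (ii) a correctness argument based on the diameter of the induced random graph restricted to honest parties.

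For the communication complexity, I would first observe that each party $\Party_i$ sends its single relayed message only to vertices in $\neighborhood_i$, and does so at most once across the entire execution (because of the $\relayed_i$ flag). Thus the total message count is bounded by $\sum_i \ssize{\neighborhood_i}$. Since each $\Party_j$ is added to $\neighborhood_i$ independently with probability $p_\mathsf{flood} = (\log(n)+\secParam)/(\epsilon n)$, we have $\mathbb{E}[\ssize{\neighborhood_i}] = \Theta((\log(n)+\secParam)/\epsilon)$. A Chernoff bound followed by a union bound over the $n$ parties gives $\ssize{\neighborhood_i} = O((\log(n)+\secParam)/\epsilon)$ for all $i$ simultaneously, except with probability $\negl(\secParam)$. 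Multiplying by the message length $\ell$ yields the claimed $O(n(\log(n)+\secParam)\ell)$ bound for constant $\epsilon$.

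For correctness, let $H$ denote the set of honest parties, $|H| \geq \epsilon n$, and restrict attention to the directed random graph $G_H$ on $H$ where $i \to j$ iff $j \in \neighborhood_i$; each such edge is present independently with probability $p_\mathsf{flood}$. Let $R_r \subseteq H$ be the set of honest parties that have relayed by round $r$. Under the assumption that some honest party begins with a type-$\type$ input, $R_1 \neq \emptyset$. I would show that $R_\rho = H$ except with probability $\negl(\secParam)$, via a two-phase analysis. In the \emph{expansion phase} ($|R_r| \leq \epsilon n / 2$), each honest party $j \notin R_r$ lies in $R_{r+1}$ independently with probability $1-(1-p_\mathsf{flood})^{|R_r|}$. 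For $|R_r| \cdot p_\mathsf{flood}$ bounded, a Taylor bound gives expected growth by a multiplicative factor of roughly $1 + \Omega(\log(n)+\secParam)$, and a Chernoff bound on the number of successes (conditional on $R_r$) yields this growth w.h.p.\ at each step. After $O(\log(\epsilon n)/\log(\log(n)+\secParam))$ rounds, $|R_r|$ crosses $\epsilon n / 2$. In the \emph{mop-up phase}, any still-unreached honest party fails to be added in one round with probability at most $(1-p_\mathsf{flood})^{\epsilon n / 2} \leq e^{-(\log(n)+\secParam)/2}$; a Markov / union bound on the remaining $\leq \epsilon n$ candidates gives a bound of $\epsilon n \cdot e^{-(\log(n)+\secParam)/2}$ per round, which becomes $\negl(\secParam)$ after a constant number of additional rounds. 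The two phases together fit within the $\rho = \floodrounds$ budget.

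The main technical obstacle will be carrying out the expansion-phase analysis with the right concentration so that the failure probability telescopes to $\negl(\secParam)$ over $\rho$ rounds. Two points need care: conditioning on the $R_r$'s as they evolve (so that the edge-existence events used in round $r+1$ are independent of those already used to define $R_r$; this holds because the edges from vertices newly added to $R_{r+1}$ are fresh), and ensuring the Chernoff bounds in the expansion phase are sharp enough in the initial rounds when $|R_r|$ may be a constant, so that one multiplicative-growth step still fails only with probability exponentially small in $\secParam$. Once this is set up, chaining the two phases and union-bounding failure events across rounds yields the lemma.
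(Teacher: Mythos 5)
Your proposal is correct in substance but takes a genuinely different route from the paper. The paper's proof is essentially two lines: it invokes a known result (Lemma~15 of the cited work of Liu-Zhang et al.) stating that the communication graph induced by the honest parties' neighborhoods is connected with diameter at most $\rho=\floodrounds$ except with probability negligible in $\secParam$, from which propagation within $\rho$ rounds is immediate; the communication bound then follows from the facts that each honest party relays at most once and each neighborhood has size $O(\log(n)+\secParam)$. You instead re-derive the propagation property from scratch via an epidemic/expansion argument on the honest subgraph (multiplicative growth by a factor of roughly $\log(n)+\secParam$ per round while the reached set is small, then a mop-up step), and you prove the neighborhood-size bound by Chernoff plus a union bound. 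What the paper's route buys is brevity and an explicit diameter bound matching the hard-coded $\rho$; what yours buys is self-containment and an explanation of where $\rho$ comes from. Two details in your plan need tightening if you carry it out: first, a party relays only once, so an unreached $j$ can be infected in round $r+1$ only by the \emph{frontier} $R_r\setminus R_{r-1}$ (edges from earlier-reached parties to $j$ are already conditioned to be absent given $j\notin R_r$), so the per-round infection probability should be $1-(1-p_\mathsf{flood})^{\ssize{R_r\setminus R_{r-1}}}$ rather than $1-(1-p_\mathsf{flood})^{\ssize{R_r}}$ --- harmless in the expansion phase, since rapid growth makes the frontier a $1-o(1)$ fraction of $R_r$, but it matters in the mop-up phase, where you must argue the frontier (not just the cumulative reached set) has size $\Omega(\epsilon n)$ when you apply the bound $(1-p_\mathsf{flood})^{\Omega(\epsilon n)}$, and you cannot simply iterate that bound over further rounds because later frontiers may be small; second, the final union bound $\epsilon n\cdot e^{-(\log(n)+\secParam)/2}\approx\sqrt{n}\,e^{-\secParam/2}$ is negligible in $\secParam$ only under the usual convention that $n$ is polynomially related to $\secParam$ (or after strengthening the exponent by ensuring a frontier of size close to $\epsilon n$), so state that explicitly. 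With these repairs your argument goes through and fits within the $\rho$-round budget, so the gap is one of care rather than of approach.
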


\begin{proof}
From \cite[Lem.\ 15]{AC:LMMRT22}, we know that the communication graph induced by the honest parties during an execution of $\flood$ is connected and has diameter at most $\rho=\floodrounds$ with overwhelming probability in $\secParam$. Therefore, if any honest party has a type-$\type$ input message, all honest parties receive a type-$\type$ message within $\rho$ rounds.

Since each honest party only sends a type-$\type$ message at most once, and the neighborhood of a party is of size $O(\log(n) + \secParam)$, the claimed communication complexity follows.
\end{proof}

\paragraph{Chan et al.'s modified protocol.}\label{sec:floodbroadcast}

We describe a modified version of Chan et al.'s broadcast protocol, where every all-to-all communication step is simply substituted by the message-propagation mechanism described above.

Following \cite{ACDNPRS19,CPS20}, we describe the protocol in a hybrid world assuming an ideal functionality $\Fmine$ parameterized by  probability $p \assign \min\{1,\frac{\kappa+1}{\epsilon n}\}$, which can be realized assuming a trusted-PKI for VRFs as setup. $\Fmine$ serves as a committee-election oracle, and has the following interface:
\begin{itemize}
    \item Mining: when a party $\Party_i$ calls $\Fmine.\mine(b)$ on a bit $b\in\{0,1\}$ for the first time, $\Fmine$
    flips a $p$-weighted coin and returns the result $b'\in\{0,1\}$. (1 indicates success, 0 indicates failure.)
    Calling $\Fmine.\mine(b)$ again in the future returns the same result $b'$.
    \item Verifying: any node can call $\Fmine.\verify(b,i)$ on a bit $b$ and index $i$. If $\Party_i$ has already called $\Fmine.\mine(b)$ and received result $b'=1$, then $\Fmine.\verify(b,i)$ returns 1; otherwise, $\Fmine.\verify(b,i)$ returns 0.
\end{itemize}

At the start of the protocol, parties invoke the ideal functionality $\Fmine$ to choose a committee $\committee$ of size $\ssize{\committee}=R = O(\secParam)$ parties, which will contain at least one honest party (with high probability). The protocol is based on the Dolev-Strong protocol \cite{DS83}, where only the committee members $\committee$ (and the sender) contribute signatures on the input message and every protocol stage $r$ is split into two mini-stages. We define a message of type-$\type_{b,r}$ as a message that contains the bit $b$ along with at least $r$ correct signatures, including the signatures from the sender and at least $r-1$ distinct committee-parties.

\begin{itemize}
    \item Each party $\Party_i$ keeps a set $\ext_i$ (initially empty). In stage $0$, the sender signs its input bit and sends $b$ and the signature to all parties (this is a $\type_{b,1}$ message).
    \item For each stage $r = 1$ to $R+1$, each party $\Party_i$ does the following.
    \begin{enumerate}
        \item First mini-stage: for every $b \notin \ext_i$ such that $\Party_i$ has received a type-$\type_{b,r}$ message, add~$b$ to $\ext_i$ and run the message-propagation protocol $\flood(\type_{b,r},n,\epsilon, \secParam)$ to distribute a type-$\type_{b,r}$ to all parties (and wait until the protocol ends).
        \item Second mini-stage: for every $b \notin \ext_i$ such that a committee member $\Party_i \in \committee$ has received a type-$\type_{b,r}$ message, pick any such message, add $b$ to $\ext_i$, create a type-$\type_{b,r+1}$ message by adding its own signature, and run the message-propagation protocol $\flood(\type_{b,r+1},n,\epsilon, \secParam)$ to distribute a type-$\type_{b,r+1}$ message to all parties (and wait until the protocol ends).
    \end{enumerate}
    \item Stage $R+2$: Each party $\Party_i$ outputs the bit contained in $\ext_i$ if $|\ext_i| = 1$; otherwise, output~$0$.
\end{itemize}

See \cref{prot:cps20+gossip} for the formal description of the protocol.

\pprotocol{$\fbc$}{Chan et al.'s modified Broadcast Protocol in the $\Fmine$-hybrid world}{prot:cps20+gossip}{!htb}{
\begin{description}
\item [Parameters:]
Let $n$ be the number of parties, let $\epsilon$ be the fraction of honest nodes, and let $\secParam$ the security parameter. Let $(\KeyGen, \Sign, \Ver)$ be a digital signature scheme.

\item [Setup:]
Let $\Fmine$ be the committee-election oracle. The parties have access to a PKI setup where each party $\Party_i$ samples $(\pk_i,\sk_i)\gets\KeyGen(1^\secParam)$ and all parties obtain $(\pk_1,\ldots,\pk_n)$.

\item [Input:]
Let $b_s$ be the input of the sender $\Party_s$.

\item [Stage $0$ (Initialization):]~
\begin{itemize}[leftmargin=*]
    \item The sender $\Party_s$ computes $\sigma=\Sign_{\sk_s}(b_s)$ and sends the message $(b_s,\sigma)$ to all parties.
    \item Each party $\Party_i$ sets its extracted set $\ext_i\gets\emptyset$ and also queries $\Fmine$, parameterized with $p\assign\min\{1,\frac{\secParam+1}{\epsilon n}\}$. Let $\committee$ denote the elected committee.
\end{itemize}

Let the number of stages be $R\assign\frac{3}{\epsilon}(\secParam+1)$. In the following, we denote by type-$\type_{b,r}$ as a message that contains the bit $b$ along with at least $r$ correct signatures (\ie such that $\Ver$ outputs $1$) from public keys of distinct committee-parties and also including the sender.

\item [Stage $r=1,\ldots, R+1$:]~
\begin{enumerate}[leftmargin=*]
\item (Stage $r.1$) Each party $\Party_i$ does the following:
\begin{itemize}[leftmargin=*]
    \item For each bit $b$ such that $\Party_i$ has received a type-$\type_{b,r}$ message $m$ and $b\notin\ext_i$: add $b$ to $\ext_i$ and invoke protocol $\flood(\type_{b,r},n, \epsilon, \secParam)$, to propagate one type-$\type_{b,r}$ message $m$.
\end{itemize}

\item (Stage $r.2$) Each party $\Party_i\in \committee$ does the following:
\begin{itemize}[leftmargin=*]
\item For each bit $b$ such that $\Party_i$ has received a type-$\type_{b,r}$ message $m$ and $b\notin\ext_i$: add $b$ to~$\ext_i$ and if $\Party_i$ can create a type $\type_{b,r+1}$ message $m$ by adding a signature $\Sign_{\sk_i}(b)$ to the received $\type_{b,r}$ message, execute protocol $\flood(\type_{b,r+1},n, \epsilon, \secParam)$, to propagate one type-$\type_{b,r+1}$ message $m$.
\end{itemize}
\end{enumerate}

\item [Stage $R+2$ (Termination):]
Each party $\Party_i$ does the following:
\begin{itemize}[leftmargin=*]
    \item If $|\ext_i|=1$, then $\Party_i$ outputs the unique bit $b_i\in\ext_i$; else, $\Party_i$ outputs the default bit~$0$.
\end{itemize}
\end{description}
}

We proceed to prove \cref{thm:UB}.

\begin{proof}[Proof of \cref{thm:UB}]
Termination is trivial, since all parties output a value at stage $R+2$.

To prove validity, let the sender be an honest party with input $b$. At the end of stage~$0$, all honest parties receive a type-$\type_{b,1}$ message. Therefore, each party $\Party_i$ adds $b$ to the set $\ext_i$ in the first mini-stage of stage $1$. Moreover, no other value is added to $\ext_i$ (except for negligible probability), since the sender does not sign any other value. At stage $R+2$, all honest parties output $b$ except for negligible probability.

To prove agreement, we show that the sets $\ext_i$ of each honest party $\Party_i$ contain the same set of values the end of the protocol at stage $R+2$.
\begin{itemize}
\item
First, consider the case where the first honest party $\Party_i$ that adds a bit $b$ to its set $\ext_i$, does so in the first mini-stage of stage $r$. This means that $\Party_i$ received a type-$\type_{b,r}$ message and a type-$\type_{b,r}$ message was subsequently propagated through $\Flood$. Note that since a type-$\type_{b,r}$ message contains $r$ signatures, and $\committee$ contains at least one honest party, then it holds that $r < R+1$ (otherwise an honest committee-member added $b$ to its extracted set previously, contradicting the assumption that $\Party_i$ is the first honest party that adds $b$ to its set).

The message-propagation mechanism then ensures that all honest parties receive a type-$\type_{b,r}$ message at the end of this first mini-stage. Therefore, in the second mini-stage of stage~$r$, there is an honest committee-member that can form a type-$\type_{b,r+1}$ message, which is distributed through $\Flood$. Therefore, in the next stage $r+1$ all parties add $b$ to their extracted set.
\item
Second, consider the case where the first honest party $\Party_i$ that adds a bit $b$ to its set $\ext_i$, does so in the second mini-stage of stage $r$. Since the second mini-stage is only executed by committee members, $\Party_i \in \committee$. Moreover, $\Party_i$ received a type-$\type_{b,r}$ message for the first time and can form a type-$\type_{b,r+1}$ message, which is distributed through $\Flood$. Note that $r < R+1$ since the type-$\type_{b,r}$ message does not contain $\Party_i$'s signature. Therefore, in the next stage $r+1$ (first mini-stage), all parties add $b$ to their extracted set.
\end{itemize}

In each stage $r$, there are at most $4$ invocations to $\Flood$ (one per bit, per mini-stage), and the message contains up to $R+1$ signatures and a bit value. Assuming each signature is of size $O(\secParam)$, the size of the message is bounded by $\ell = O((R+1)\secParam + 1) = O(\secParam^2)$. Since the number of stages invoking $\Flood$ is $R+1 = O(\secParam)$ and the cost of each instance of $\Flood$ is $O(n\cdot(\log(n) + \secParam) \cdot \ell)$, the total incurred communication complexity is $O(\secParam \cdot n\cdot(\log(n) + \secParam) \cdot \secParam^2) = O(\secParam^3 n \log(n) + \secParam^4  n))= \tilde{O}(n)$ and the per-party communication is $\tilde{O}(1)$.
\end{proof}

\subsection*{Acknowledgments}
Part of E.\ Blum's work was done while the author was an intern at NTT Research.
E.\ Boyle's research is supported in part by AFOSR Award FA9550-21-1-0046 and ERC Project HSS (852952).
R.\ Cohen's research is supported in part by NSF grant no.\ 2055568 and by the Algorand Centres of Excellence programme managed by Algorand Foundation. Any opinions, findings, and conclusions or recommendations expressed in this material are those of the authors and do not necessarily reflect the views of Algorand Foundation.
Part of C.D.\ Liu-Zhang's work was done while the author was at NTT Research.

{\small{
\bibliographystyle{alpha}
\bibliography{crypto}
}}

\end{document}